\newcommand{\vb}[1]{\verb+#1+}
\newcommand{\defeq}{\triangleq}
\def\squareforqed{\hbox{\rlap{$\sqcap$}$\sqcup$}}
\def\qed{\ifmmode\squareforqed\else{\unskip\nobreak\hfil
\penalty50\hskip1em\null\nobreak\hfil\squareforqed
\parfillskip=0pt\finalhyphendemerits=0\endgraf}\fi}
\newcommand{\ttt}[1]{\texttt{#1}}
\newcommand{\gdesc}[1]{\text{\textit{#1}}}
\newcommand{\cod}[1]{\llbracket #1 \rrbracket}
\newcommand{\lcod}[2]{\llbracket #1 \rrbracket_{#2}}
\newcommand{\store}{m}
\newcommand{\extend}[3]{#1\{#2 \mapsto #3\}}
\newcommand{\assign}[2]{#1 := #2}
\newcounter{topiccounter}
\newcommand{\redx}{\rightarrow}
\newcommand{\redxs}{\redx^*}
\newcommand{\elet}[3]{\mathrm{let}\ #1 = #2\ \mathrm{in}\ #3}
\newcommand{\dom}{\mathrm{dom}}
\newcommand{\latel}{\varsigma}
\newcommand{\hilab}{\mathrm{High}}
\newcommand{\lolab}{\mathrm{Low}}
\newcommand{\minifed}{\mathit{Overture}}
\newcommand{\minicat}{\minifed}
\newcommand{\fedprot}{\minifed}
\newcommand{\metaprot}{\mathit{Prelude}}
\newcommand{\prog}{\pi}
\newcommand{\bop}{\ \mathit{binop}\ }
\newcommand{\mems}{\mathit{mems}}
\newcommand{\margd}[2]{{#1}_{#2}}
\newcommand{\condd}[3]{#1_{({#2}|{#3})}}
\newcommand{\progd}{\mathrm{BD}}
\newcommand{\progtt}{\mathrm{BD}}
\newcommand{\vars}{\mathit{vars}}
\newcommand{\iov}{\mathit{iovars}}
\newcommand{\flips}{\mathit{flips}}
\newcommand{\fedcat}{\minifed}
\newcommand{\idealf}{\mathcal{F}}
\def\TirName#1{\text{\sc #1}}
\newcommand{\flip}[2]{\ttt{flip[}#1\ttt{,}#2\ttt{]}}
\newcommand{\secret}[2]{\ttt{s[}#1\ttt{,}#2\ttt{]}}
\renewcommand{\elet}[3]{\ttt{let}\ #1\ \ttt{=}\ #2\ \ttt{in}\ #3}
\renewcommand{\redx}{\xrightarrow{}}
\renewcommand{\redxs}{\xrightarrow{}^{*}}
\newcommand{\secrets}{\mathit{secrets}}
\newcommand{\views}{\mathit{views}}
\newcommand{\cid}{\iota}
\newcommand{\OT}[3]{\ttt{OT(} #1 \ttt{,}\ #2 \ttt{,}\ #3 \ttt{)}}
\newcommand{\mux}[3]{\ttt{mux(} #1 \ttt{,}\ #2 \ttt{,}\ #3 \ttt{)}}
\newcommand{\codebase}{\mathcal{C}}
\newcommand{\flab}{\ell}
\newcommand{\be}{\varepsilon}
\newcommand{\instr}{\mathbf{c}}
\newcommand{\runs}{\mathit{runs}}
\newcommand{\concat}{\ttt{++}}
\newcommand{\sep}[3]{#1 \vdash #2 * #3}
\newcommand{\notsep}[3]{#1 \not\vdash #2 * #3}
\newcommand{\condp}[3]{#1|#2 \vdash #3}
\newcommand{\notcondp}[3]{#1|#2 \not\vdash #3}
\newcommand{\condsep}[4]{\condp{#1}{#2}{#3 * #4}}
\newcommand{\notcondsep}[4]{\notcondp{#1}{#2}{#3 * #4}}
\newcommand{\pmf}{\mathcal{P}}
\renewcommand{\flip}[1]{\ttt{r[}#1\ttt{]}}
\newcommand{\locflip}{\ttt{r[}\mathtt{local}\ttt{]}}
\renewcommand{\secret}[1]{\ttt{s[}#1\ttt{]}}
\newcommand{\mesg}[1]{\ttt{m[}#1\ttt{]}}
\newcommand{\outkw}{\ttt{out}}
\newcommand{\out}[1]{\elab{\outkw}{#1}}
\newcommand{\rvl}[1]{\ttt{p[}#1\ttt{]}}
\newcommand{\elab}[2]{#1\ttt{@}#2}
\newcommand{\eassign}[4]{\elab{#1}{#2} := \elab{#3}{#4}}
\newcommand{\xassign}[3]{#1 := \elab{#2}{#3}}
\newcommand{\pubout}[3]{\out{#1} := \elab{#2}{#3}}
\newcommand{\reveal}[3]{\rvl{#1} := \elab{#2}{#3}}
\newcommand{\adversary}{\mathcal{A}}
\newcommand{\aredx}{\redx_{\adversary}}
\newcommand{\aredxs}{\redxs_{\adversary}}
\newcommand{\arewrite}{\mathit{rewrite}_{\adversary}}
\newcommand{\cinputs}{V_{C \rhd H}}
\newcommand{\houtputs}{V_{H \rhd C}}
\newcommand{\aruns}{\mathit{runs}_\adversary}
\newcommand{\botruns}{\mathit{runs}_{\adversary,\bot}}
\renewcommand{\store}{\sigma}
\renewcommand{\runs}{\mathit{runs}}
\newcommand{\fcod}[1]{\lcod{#1}{}}
\renewcommand{\flips}{\mathit{rands}}
\newcommand{\ftimes}{*}
\newcommand{\fplus}{+}
\newcommand{\fminus}{-}
\newcommand{\macgv}[1]{\langle #1 \rangle}
\newcommand{\preproc}{\mathit{preproc}}
\newcommand{\assert}[1]{\ttt{assert(}#1\ttt{)}}
\newcommand{\mv}{\nu}
\newcommand{\sx}[2]{\elab{\secret{#1}}{#2}}
\newcommand{\mx}[2]{\elab{\mesg{#1}}{#2}} 
\newcommand{\rx}[2]{\elab{\flip{#1}}{#2}}
\newcommand{\eqcast}[2]{#1\ \ttt{as}\ #2}
\newcommand{\itj}[3]{\vdash_{#1} #2 : #3}
\newcommand{\ipj}[3]{#1 \vdash #2 : #3}
\newcommand{\cty}[2]{c(#1,#2)}
\newcommand{\setit}[1]{\{ #1 \}}
\newcommand{\ty}{T}
\newcommand{\ity}[2]{#1 \cdot #2}
\newcommand{\eqs}{\mathit{E}}
\newcommand{\toeq}[1]{\lfloor #1 \rfloor}
\newcommand{\eop}{\equiv}
\newcommand{\seclev}{\mathcal{L}}
\newcommand{\tsig}{\mathrm{sig}}
\newcommand{\hty}[5]{\{ #1 \}\ #2,#3 \cdot #4\  \{ #5 \} }
\newcommand{\dht}[6]{\Pi #1 . \hty{#2}{#3}{#4}{#5}{#6}}
\newcommand{\mtj}[6]{\vdash #1 : \hty{#2}{#3}{#4}{#5}{#6}}
\newcommand{\atj}[3]{\Vdash #1 : (#2,#3)}
\newcommand{\eqj}[4]{#1,#2 \vdash #3 : #4}
\newcommand{\cpj}[4]{#1,#2 \vdash #3 : #4}
\newcommand{\leakclose}[3]{#1 \vdash #2 \leadsto #3}
\newcommand{\leakj}[3]{#1,#2 \vdash_{\mathit{leak}} #3}
\newcommand{\cheatj}[3]{#1 \underset{#2}{\leadsto} #3}
\renewcommand{\redx}{\Rightarrow}
\renewcommand{\redxs}{\redx}
\newcommand{\abort}{\bot}
\newcommand{\cmd}{\instr}
\renewcommand{\OT}[4]{\ttt{OT(}\elab{#1}{#2},#3,#4 \ttt{)}}
\newcommand{\eqspre}{\eqs_{\mathit{pre}}}
\newcommand{\macbdoz}[1]{\psi_{\mathit{BDOZ}}(#1)}
\newcommand{\notg}[1]{\breve{#1}}
\newcommand{\fresh}{\mathit{fresh}}
\newcommand{\fieldp}[1]{\mathbb{F}_{#1}}
\newcommand{\precond}{\mathit{precond}}
\newcommand{\postcond}{\mathit{postcond}}
\newcommand{\compwrapfig}
{
  \begin{wrapfigure}{r}{0cm}
    \begin{tabular}{lccccccc}
      & 
      \begin{sideways} probabilistic language \end{sideways} &
      \begin{sideways} probabilistic conditioning \end{sideways} & 
      \begin{sideways} low-level protocols \end{sideways} & 
      \begin{sideways} passive security \end{sideways} & 
      \begin{sideways} malicious security \end{sideways}& 
      \begin{sideways} hyperproperties \end{sideways}& 
      \begin{sideways} automation \end{sideways}\\\hline\hline
      Haskell EDSL \cite{6266151} & \checkmark &  & \checkmark  & \checkmark & & \checkmark & \checkmark \\\hline
      MPC in SecreC \cite{almeida2018enforcing} & \checkmark & \checkmark &   & \checkmark & & \checkmark & \checkmark \\\hline
      $\lambda_{\text{obliv}}$ \cite{darais2019language} & \checkmark & & \checkmark & & & \checkmark & \checkmark \\\hline
      PSL \cite{barthe2019probabilistic} & \checkmark & & \checkmark & & & & \\\hline
      Lilac \cite{li2023lilac} & \checkmark & \checkmark & & & & & \\\hline
      Wys$^*$ \cite{wysstar} & & & & \checkmark & & \checkmark & \checkmark\\\hline
      Viaduct \cite{10.1145/3453483.3454074,viaduct-UC} & & & & \checkmark & \checkmark & \checkmark & \checkmark\\\hline
      MPC in EasyCrypt \cite{8429300} &  \checkmark &  \checkmark &  \checkmark & \checkmark & \checkmark & \checkmark & \\\hline
      $\metaprot$/$\minifed$ \cite{skalka-near-ppdp24} & \checkmark & \checkmark & \checkmark & \checkmark & \checkmark & \checkmark & * \\\hline
      This work & \checkmark & \checkmark & \checkmark & \checkmark & \checkmark & \checkmark & \checkmark\\
      \hline
    \end{tabular}
    \caption{Comparison of systems for verification of MPC security in PLs. * indicates limited support for automation.}
    \label{fig-comp-wrap}
  \end{wrapfigure}
}
\newcommand{\minicatsyntaxfig}{
\begin{fpfig}[t]{Syntax of $\minicat$}{fig-minicat-syntax}
\small{
$$
    \begin{array}{rcl@{\hspace{4mm}}r}
      \multicolumn{4}{l}{v \in \mathbb{F}_p,\ w \in \mathrm{String},\ \cid \in \mathrm{Clients} \subset  \mathbb{N} }\\[2mm] 
      \be &::=& v \mid \flip{w} \mid \secret{w} \mid \mesg{w} \mid \rvl{w} \mid \be \fminus \be \mid \be \fplus \be \mid \be \ftimes \be \mid \OT{\be}{\cid}{\be}{\be} & \textit{expressions}\\[1mm]
      x &::=& \elab{\flip{w}}{\cid} \mid \elab{\secret{w}}{\cid} \mid \elab{\mesg{w}}{\cid} \mid  \rvl{w} \mid \out{\cid} & \textit{variables} \\[1mm]
      \prog &::=& \eassign{\mesg{w}}{\cid}{\be}{\cid} \mid \reveal{w}{\be}{\cid} \mid \pubout{\cid}{\be}{\cid} \mid \prog;\prog & \textit{protocols}
    \end{array}
$$}
\end{fpfig}    
}
\newcommand{\minicatredxfig}{
\begin{fpfig}[t]{Semantics of $\minicat$ expressions (T) and programs (B).}{fig-minicat-redx}
\small{
 $$
  \begin{array}{c@{\hspace{5mm}}c}
  \begin{array}{rcl}
    \lcod{\store, v}{\cid} &=& v\\
    \lcod{\store, \be_1 \fplus \be_2}{\cid} &=& \fcod{\lcod{\store, \be_1}{\cid} \fplus \lcod{\store, \be_2}{\cid}}\\ 
    \lcod{\store, \be_1 \fminus \be_2}{\cid} &=& \fcod{\lcod{\store, \be_1}{\cid} \fminus \lcod{\store, \be_2}{\cid}}\\ 
    \lcod{\store, \be_1 \ftimes \be_2}{\cid} &=& \fcod{\lcod{\store, \be_1}{\cid} \ftimes \lcod{\store, \be_2}{\cid}}\\
  \end{array} & 
  \begin{array}{rcl}
    \lcod{\store, \flip{w}}{\cid} &=& \store(\elab{\flip{w}}{\cid})\\
    \lcod{\store, \secret{w}}{\cid} &=& \store(\elab{\secret{w}}{\cid})\\
    \lcod{\store, \mesg{w}}{\cid} &=& \store(\elab{\mesg{w}}{\cid})\\
    \lcod{\store, \rvl{w}}{\cid} &=& \store(\rvl{w})\\
  \end{array}
  \end{array}
  $$

  %
\begin{mathpar}
  (\store, \xassign{x}{\be}{\cid}) \redx \extend{\store}{x}{\lcod{\store,\be}{\cid}}
  
  \inferrule
      {(\store_1,\prog_1) \redx \store_2 \\ (\store_2,\prog_2) \redx \store_3 }
      {(\store_1,\prog_1;\prog_2) \redx \store_3}
\end{mathpar}
}
\end{fpfig}
}
\newcommand{\minicataredxfig}{
\begin{fpfig}[t]{Adversarial semantics of $\minicat$.}{fig-minicat-aredx}
\small{
\begin{mathpar}
  \inferrule
      { \cid \in H }
      { (\store, \xassign{x}{\be}{\cid}) \aredx \extend{\store}{x}{\lcod{\store,\be}{\cid}} }
      
  \inferrule
      {\cid \in C }
      { (\store, \xassign{x}{\be}{\cid}) \aredx \extend{\store}{x}{\lcod{\arewrite(\store_C,\be)}{\cid}}}
      
  \inferrule
      {\lcod{\store,\be_1}{\cid} = \lcod{\store,\be_2}{\cid}  \text{\ or\ } \cid \in C}
      { (\store,\elab{\assert{\be_1 = \be_2}}{\cid}) \aredx \store }
      
  \inferrule
      {\lcod{\store,\be_1}{\cid} \ne \lcod{\store,\be_2}{\cid}}
      {(\store,\elab{\assert{\be_1 = \be_2}}{\cid}) \aredx \abort}
  
  \inferrule
      {(\store_1,\prog_1) \aredx \store_2 \\ (\store_2,\prog_2) \aredx \store_3 }
      {(\store_1,\prog_1;\prog_2) \aredx \store_3}

  \inferrule
      {(\store_1,\prog_1) \aredx \abort}
      {(\store_1,\prog_1;\prog_2) \aredx \abort}
      
  \inferrule
      {(\store_1,\prog_1) \aredx \store_2 \\ (\store_2,\prog_2) \aredx \abort }
      {(\store_1,\prog_1;\prog_2) \aredx \store_2}
\end{mathpar}}
\end{fpfig}
}
\newcommand{\cpjfig}{
\begin{fpfig}[t]{Syntax and Derivation Rules for $\minicat$ Confidentiality Types}{fig-cpj}
\small{
$$
\begin{array}{rcl@{\hspace{3mm}}l}
  t &::=& x \mid \cty{x}{T} \\
  \ty & \in & 2^{t} & \gdesc{confidentiality types}\\
  \Gamma &::=& \varnothing \mid \Gamma; x : \ty & \gdesc{confidentiality type environments}
\end{array} 
$$
\medskip
\begin{mathpar}
  \inferrule[DepTy]
  {}
  {\eqj{\varnothing}{\eqs}{\phi}{\vars(\phi)}}
  
  \inferrule[Encode]
  {\eqs \models \phi \eop \phi' \oplus \rx{w}{\cid} \\
   \oplus \in \{ \fplus,\fminus \}\\
   \eqj{R}{\eqs}{\phi'}{\ty}}
  {\eqj{R;\{ \rx{w}{\cid} \}}{\eqs}{\phi}{\setit{\cty{\rx{w}{\cid}}{\ty}}}}
\end{mathpar}

\begin{mathpar}
  \inferrule[Send]
            {\eqj{R}{\eqs}{\phi}{\ty}}
            {\cpj{R}{\eqs}{x \eop \phi}{(x : \ty)}}
            
  \inferrule[Seq]
            {\cpj{R_1}{\eqs}{\phi_1}{\Gamma_1}\\
             \cpj{R_2}{\eqs}{\phi_2}{\Gamma_2}}
            {\cpj{R_1;R_2}{\eqs}{\phi_1 \wedge \phi_2}{\Gamma_1;\Gamma_2}}
\end{mathpar}
}
\end{fpfig}
}
\newcommand{\leakjfig}{
\begin{fpfig}[t]{Dependencies in Views: Derivation Rules}{fig-leakj}
\small{
\begin{mathpar}
  \inferrule
      {\leakclose{\Gamma}{\ty_1}{\ty_2} \\ \leakclose{\Gamma}{\ty_2}{\ty_3}}
      {\leakclose{\Gamma}{\ty_1}{\ty_3}}

      \leakclose{\Gamma}{\ty \cup \setit{\mx{w}{\cid}}}{\ty \cup \Gamma(\mx{w}{\cid})}

      \leakclose{\Gamma}{\ty_1 \cup \setit{x, \cty{x}{\ty_2}}}{\ty_1\cup\ty_2}
\end{mathpar}

\begin{mathpar}
  \inferrule
      {}
      {\leakj{\Gamma}{\varnothing}{\varnothing}}

\inferrule
      {\leakj{\Gamma}{M}{\ty'} \\ \leakclose{\Gamma}{\ty'\cup\setit{x}}{\ty}}
      {\leakj{\Gamma}{M \cup \setit{x}}{\ty}}
\end{mathpar}
}
\end{fpfig}
}
\newcommand{\ipjfig}{
\begin{fpfig}[t]{Syntax and derivation rule of $\minicat$ integrity types}{fig-ipj}
\small{
$$
\begin{array}{rcl@{\hspace{4mm}}l}
  \latel &::=& \hilab \mid \lolab & \gdesc{integrity labels} \\
  \Delta &::=& \varnothing \mid \Delta; x : \ity{\cid}{V} & \gdesc{integrity type environments}
\end{array} 
$$

\begin{mathpar}
  \inferrule[Value]
  {}
  {\itj{\cid}{v}{\varnothing}}
  
  \inferrule[Secret]
  {}
  {\itj{\cid}{\secret{w}}{\varnothing}}
  
  \inferrule[Rando]
  {}
  {\itj{\cid}{\flip{w}}{\varnothing}}
  
  \inferrule[Mesg]
  {}
  {\itj{\cid}{\mesg{w}}{\setit{\mx{w}{\cid}}}}
    
  \inferrule[PubM]
  {}
  {\itj{\cid}{\rvl{w}}{\setit{\rvl{w}}}}

  \inferrule[Binop]
  {\itj{\cid}{\be_1}{V_1} \\
   \itj{\cid}{\be_2}{V_2} \\ \oplus \in \{ \fplus,\fminus,\ftimes \}}
  {\itj{\cid}{\be_1 \oplus \be_2}{V_1 \cup V_2}}
%
\end{mathpar}

\begin{mathpar}
  \inferrule[Send]
            {\itj{\cid}{\be}{V}}
            {\ipj{\eqs}{\xassign{x}{\be}{\cid}}{(x : \ity{\cid}{V})}}
             
%
  \inferrule[Seq]
            {\ipj{\eqs}{\prog_1}{\Delta_1}\\
             \ipj{\eqs}{\prog_2}{\Delta_2}}
            {\ipj{\eqs}{\prog_1;\prog_2}{\Delta_1;\Delta_2}}

  \inferrule[MAC]
            {\eqs \models \toeq{\elab{\assert{\macbdoz{w}}}{\cid}}}
            {\ipj{\eqs}{\elab{\assert{\macbdoz{w}}}{\cid}}{(\mx{w\ttt{s}}{\cid}: \ity{\cid}{\varnothing})}}
%
\end{mathpar}
}
\end{fpfig}
}
\newcommand{\cheatjfig}{
\begin{fpfig}[t]{Assigning integrity labels to variables}{fig-cheatj}
\small{
\begin{mathpar}
  \inferrule
      {}
      {\cheatj{\varnothing}{H,C}{\seclev_{H,C}}}
      
  \inferrule
      {\cheatj{\Delta}{H,C}{\seclev} \\ \cid \in H}
      {\cheatj{\Delta; x : \ity{\cid}{V}}{H,C}{\extend{\seclev}{x}{\hilab \wedge (\bigwedge_{x \in V} \seclev_2(x))}}}
      
  \inferrule
      {\cheatj{\Delta}{H,C}{\seclev} \\ \cid \in C}
      {\cheatj{\Delta; x : \ity{\cid}{V}}{H,C}{\extend{\seclev}{x}{\lolab}}}
\end{mathpar}
}
\end{fpfig}
}
\newcommand{\metaprotsyntaxfig}{
  \begin{fpfig}[t]{$\metaprot$ syntax.}{fig-metaprotsyntax}
\small{
$$
\begin{array}{rcl@{\hspace{4mm}}r}
  \multicolumn{3}{l}{\flab \in \mathrm{Field},\   y \in \mathrm{EVar}, \  f \in \mathrm{FName}}\\[1mm]
  e &::=& \mv \mid \flip{e} \mid \secret{e} \mid \mesg{e} \mid \rvl{e} \mid \outkw \mid e \bop e 
   \mid y \mid & \gdesc{expressions}\\
  & & e.\flab \mid \elab{e}{e} \mid \elet{y}{e}{e} \mid  f(e,\ldots,e) \mid \{ \flab = e; \ldots; \flab = e \} \\[1mm]
  \cmd &::=& 
  \assign{e}{e} \mid f(e,\ldots,e) \mid
  \elet{y}{e}{\instr} \mid  \cmd;\cmd & \gdesc{instructions}\\
         & & \elab{\assert{e = e}}{e} \mid \eqcast{\mx{e}{e}}{\notg{\phi}} \\[1mm] 
  \bop &::=& \fplus \mid \fminus \mid \ftimes \mid \concat  \\[1mm]
  \mv &::=& w \mid \cid \mid \be \mid x \mid \{ \flab = \mv;\ldots;\flab = \mv \} 
  & \gdesc{values}\\[1mm] 
  \mathit{fn} &::=& f(y,\ldots,y) \{ e \} \mid  f(y,\ldots,y) \{ \cmd \} & \textit{functions}
\end{array}
$$
}
\end{fpfig}
}
\newcommand{\metaprotexprsemanticsfig}{
  \begin{fpfig}[t]{Semantics of $\metaprot$ expressions.}{fig-metaprotexprsemantics}
\small{
  \begin{mathpar}
  \inferrule
      {e_1 \redx \mv \\ e_2[\mv/y] \redx \mv'}
      {\elet{y}{e_1}{e_2} \redx \mv'}
      
  \inferrule
      {e_1 \redx \be \\ e_2 \redx \cid}
      {\elab{e_1}{e_2} \redx \elab{\be}{\cid}}

  \inferrule
      {\codebase(f) = y_1,\ldots,y_n,\ e \\ e_1 \redx \mv_1 \cdots e_n \redx \mv_n \\
        e[\mv_1/y_1]\cdots[\mv_n/y_n] \redx \mv}
      {f(e_1,\ldots,e_n) \redx \mv}

  \inferrule
      {e_1 \redx \mv_1 \\ \cdots \\ e_n \redx \mv_n }
      {\{ \flab_1 = e_1; \ldots; \flab_n = e_n \} \redx \{ \flab_1 = \mv_1; \ldots; \flab_n = \mv_n \} }

  \inferrule
      {e \redx \{\ldots; \flab = \mv; \ldots\}}
      {e.\flab \redx \mv}

  \inferrule
      {e_1 \redx w_1 \\ e_2 \redx w_2}
      {e_1 \concat e_2 \redx w_1w_2}

  \inferrule
      {e \redx w}
      {\mesg{e} \redx \mesg{w}}
      
  \inferrule
      {e_1 \redx \be_1 \\ e_2 \redx \be_2}     
      {e_1 \fplus e_2 \redx \be_1 \fplus \be_2}
\end{mathpar}
}
\end{fpfig}
}
\newcommand{\metaprotinstrsemanticsfig}{
  \begin{fpfig}[t]{Semantics of $\metaprot$ instructions.}{fig-metaprotinstrsemantics}
\begin{mathpar}
  \inferrule
      {e_1 \redx x \\ e_2 \redx \elab{\be}{\cid}}
      {\assign{e_1}{e_2} \redx \xassign{x}{\be}{\cid}}

  \inferrule
      {e_1 \redx \be_1 \\ e_2 \redx \be_2 \\ e_3 \redx \cid}
      {\elab{\assert{e_1 = e_2}}{e_3} \redx \elab{\assert{\be_1 = \be_2}}{\cid}}

  \inferrule
      {\codebase(f) = y_1,\ldots,y_n, \instr \\e_1 \redx \mv_1 \ \cdots \ e_n \redx \mv_n \\
        \instr[\mv_1/y_1]\cdots[\mv_n/y_n] \redx \prog
      }
      {f(e_1,\ldots,e_n) \redx \prog}
      
  \inferrule
      {e \redx \mv \\ \instr[\mv/y] \redx \prog}
      {\elet{y}{e}{\instr} \redx \prog}

  \inferrule
      {e_1 \redx \prog_1 \\ e_2 \redx \prog_2}
      {e_1;e_2 \redx \prog_1;\prog_2}
\end{mathpar}
\end{fpfig}
}
\newcommand{\atjfig}{
  \begin{fpfig}[t]{Algorithmic type judgements for $\minicat$.}{fig-atj}
\small{
\begin{mathpar}
  \atj{x}{\varnothing}{\setit{x}}

  \inferrule
  {\atj{\phi}{R}{\ty} \\ \rx{w}{\cid}\not\in R \\ \oplus \in \setit{\fplus,\fminus}}
  {\atj{\phi \oplus \rx{w}{\cid}}{R \cup \setit{\rx{w}{\cid}}}{\setit{\cty{\rx{w}{\cid}}{\ty}}}}

  \inferrule
  {\atj{\phi_1}{R_1}{\ty_1} \\
   \atj{\phi_2}{R_2}{\ty_2} \\ \oplus \in \{ \fplus,\fminus,\ftimes \}}
  {\atj{\phi_1 \oplus \phi_2}{R_1;R_2}{\ty_1 \cup \ty_2}}
\end{mathpar}
}
\end{fpfig}
}
\newcommand{\notgfig}{
\begin{fpfig}[t]{Evaluation of expressions within types and constraints.}{fig-notg}
$$
\begin{array}{cc}
  \begin{array}{rcl}
    \notg{x} &::=& \elab{\flip{e}}{e} \mid \elab{\secret{e}}{e} \mid \elab{\mesg{e}}{e} \mid \rvl{e} \mid \out{e}\\
  \notg{\phi} &::=& \notg{x} \mid \notg{\phi} \fplus \notg{\phi} \mid \notg{\phi} \fminus \notg{\phi} \mid \notg{\phi} \ftimes \notg{\phi} \\
  \notg{\eqs} &::=& \notg{\phi} \eop \notg{\phi} \mid \notg{\eqs} \wedge \notg{\eqs} \\
  \notg{X} &\in& 2^{\notg{x}}
\end{array}& \qquad
\begin{array}{rcl}
  \notg{t} &::=& e \mid \cty{e}{\notg{\ty}} \\
  \notg{\ty} & \in & 2^{\notg{t}}\\
  \notg{\Gamma} &::=& \varnothing \mid \notg\Gamma; e : \notg{\ty}\\
  \notg\Delta &::=& \varnothing \mid \notg\Delta; e : \ity{e}{\notg{V}}
\end{array}
\end{array}
$$

\begin{mathpar}
  \inferrule
      {\notg{\phi_1} \redx \phi_1 \\ \notg{\phi_2} \redx \phi_2}     
      {\notg{\phi_1} \ftimes \notg{\phi_2} \redx \phi_1 \ftimes \phi_2}

  \inferrule
      {\notg{\phi_1} \redx \phi_1 \\ \notg{\phi_2} \redx \phi_2}
      {\notg{\phi_1} \eop \notg{\phi_2} \redx \phi_1 \eop \phi_2}

  \inferrule
      {\notg{\eqs_1} \redx \eqs_1 \\ \notg{\eqs_2} \redx \eqs_2 }
      {\notg{\eqs_1} \wedge \notg{\eqs_2} \redx \eqs_1 \wedge \eqs_2}
\end{mathpar}

\begin{mathpar}
  \inferrule
      {e \redx x \\ \notg{\ty} \redx \ty}
      {\cty{e}{\notg{\ty}} \redx \cty{x}{\ty}}
      
  \inferrule
      {\notg{t_1} \redx t_1 \\ \cdots \\ \notg{t_n} \redx t_n}
      {\setit{\notg{t_1},\ldots,\notg{t_n}} \redx \setit{ t_1,\ldots,t_n }}

  \inferrule
      {\notg{\Gamma} \redx \Gamma \\ e \redx x \\ \notg{\ty} \redx \ty }
      {\notg{\Gamma}; e : \notg{\ty} \redx \Gamma; x : \ty }

  \inferrule
      {\notg{\Delta} \redx \Delta \\ e_1 \redx x  \\ e_2 \redx \cid \\ \notg{V} \redx V}
      {\notg{\Delta}; e_1 : \ity{e_2}{\notg{V}} \redx \Delta; x : \ity{\cid}{V} }

  \inferrule
      {\notg{\eqs_1} \redx \eqs_1 \\ \notg{\Gamma} \redx \\ \notg{R} \redx R
        \\ \notg{\Delta} \redx \Delta \\ \notg{\eqs_2} \redx \eqs_2}
      {\hty{\notg{\eqs_1}}{\notg{\Gamma}}{\notg{R}}{\notg{\Delta}}{\notg{\eqs_2}} \redx
        \hty{\eqs_1}{\Gamma}{R}{\Delta}{\eqs_2}}
\end{mathpar}
    
\end{fpfig}
}
\newcommand{\mtjfig}{
\begin{fpfig}[h]{$\metaprot$ type derivation rules for instructions.}{fig-mtj}
\begin{mathpar}
  \inferrule[Mesg]
            {\assign{e_1}{e_2} \redx \xassign{x}{\be}{\cid}  \\ \atj{\toeq{\elab{\be}{\cid}}}{R}{\ty} \\
              \itj{\cid}{\be}{V}}
            {\mtj{\assign{e_1}{e_2}}{\eqs}{(x:\ty)}{R}{(x : \ity{\cid}{V})}{\eqs \wedge x \eop \toeq{\elab{\be}{\cid}}}}

  \inferrule[Encode]
            {\mx{e_1}{e_2} \redx x \\ \notg{\phi} \redx \phi \\
              \eqs \models x \eop \phi\\
              \atj{\phi}{R}{\ty}}
            {\mtj{\eqcast{\mx{e_1}{e_2}}{\notg{\phi}}}{\eqs}{(x : \ty)}{R}{\varnothing}{\eqs}}

  \inferrule[Assert]
            {\elab{\assert{e_1 = e_2}}{e_3} \redx \elab{\assert{\be_1 = \be_2}}{\cid} \\
             \ipj{\eqs}{\elab{\assert{\be_1 = \be_2}}{\cid}}{\Delta}}
            {\mtj{\elab{\assert{e_1 = e_2}}{e_3}}{\eqs}{\varnothing}{\varnothing}{\Delta}{\eqs}}
            
  \inferrule[App]
            {f : \dht{y_1,\ldots,y_n}{\notg{\eqs_1}}{\notg{\Gamma}}{\notg{R}}{\notg{\Delta}}{\notg{\eqs_2}} \\
              e_1 \redx \mv_1\ \cdots\ e_n \redx \mv_n \\
              (\hty{\notg{\eqs_1}}{\notg{\Gamma}}{\notg{R}}{\notg{\Delta}}{\notg{\eqs_2}})[\mv_1/y_1]\cdots[\mv_n/y_n] \redx
                    \hty{\eqs_1}{\Gamma}{R}{\Delta}{\eqs_2} \\
              \eqs \models \eqs_1}
            {\mtj{f(e_1,\ldots,e_n)}{\eqs}{\Gamma}{R}{\Delta}{\eqs \wedge \eqs_2}}

  \inferrule[Seq]          
            {\mtj{\prog_1}{\eqs_1}{\Gamma_1}{R_1}{\Delta_1}{\eqs_2} \\
             \mtj{\prog_2}{\eqs_2}{\Gamma_2}{R_2}{\Delta_2}{\eqs_3}}
            {\mtj{\prog_1;\prog_2}{\eqs_1}{\Gamma_1;\Gamma_2}{R_1;R_2}{\Delta_1;\Delta_2}{\eqs_3}}
\end{mathpar}
\end{fpfig}
}
\newcommand{\mtjfnfig}{
\begin{fpfig}[h]{$\metaprot$ type derivation rules for function definitions.}{fig-mtjfn}
\begin{mathpar}
   \inferrule[Fn]
            {\codebase(f) = y_1,\ldots,y_n, \instr \\
              \mtj{\instr[\mv_1/y_1]\cdots[\mv_n/y_n]}{\eqs_1}{\Gamma}{R}{\Delta}{\eqs_2}\\
              \fresh(\mv_1,\ldots,\mv_n) \\
              (\hty{\notg{\eqs_1}}{\notg{\Gamma}}{\notg{R}}{\notg{\Delta}}{\notg{\eqs_2}})[\mv_1/y_1]\cdots[\mv_n/y_n]  \redx
                    \hty{\eqs_1}{\Gamma}{R}{\Delta}{\eqs_2} }
            {f : \dht{y_1,\ldots,y_n}{\notg{\eqs_1}}{\notg{\Gamma}}{\notg{R}}{\notg{\Delta}}{\notg{\eqs_2}}}

  \inferrule[FnPre]
            {f : \dht{y_1,\ldots,y_n}{\notg{\eqs}}{\notg{\Gamma}}{\notg{R}}{\notg{\Delta}}{\notg{\eqs_2}} \\
              \precond(f) = \notg{\eqs_1} \\
              \fresh(\mv_1,\ldots,\mv_n) \\
              \notg{\eqs}[\mv_1/y_1]\cdots[\mv_n/y_n]  \redx \eqs \\
              \notg{\eqs_1}[\mv_1/y_1]\cdots[\mv_n/y_n]  \redx \eqs_1 \\
              \eqs_1 \models \eqs             
            }
            {f : \dht{y_1,\ldots,y_n}{\notg{\eqs_1}}{\notg{\Gamma}}{\notg{R}}{\notg{\Delta}}{\notg{\eqs_2}}}

  \inferrule[FnPost]
            {f : \dht{y_1,\ldots,y_n}{\notg{\eqs_1}}{\notg{\Gamma}}{\notg{R}}{\notg{\Delta}}{\notg{\eqs}} \\
              \postcond(f) = \notg{\eqs_2} \\
              \fresh(\mv_1,\ldots,\mv_n) \\
              \notg{\eqs}[\mv_1/y_1]\cdots[\mv_n/y_n]  \redx \eqs \\
              \notg{\eqs_2}[\mv_1/y_1]\cdots[\mv_n/y_n]  \redx \eqs_2 \\
              \eqs \models \eqs_2              
            }
            {f : \dht{y_1,\ldots,y_n}{\notg{\eqs_1}}{\notg{\Gamma}}{\notg{R}}{\notg{\Delta}}{\notg{\eqs_2}}}
\end{mathpar}
\end{fpfig}
}
\titlerunning{Security Types for Low-Level MPC}
\title{SMT-Boosted Security Types for Low-Level MPC}
\author{
Christian Skalka
\and
Joseph P.~Near}
\institute{
University of Vermont, Burlington VT 05405, USA 
\email{ceskalka@uvm.edu,jnear@uvm.edu}
}
\newcommand{\theabstract}{
  \begin{abstract}
  Secure Multi-Party Computation (MPC) is an important enabling
  technology for data privacy in modern distributed applications. We
  develop a new type theory to automatically enforce correctness,
  confidentiality, and integrity properties of protocols written in
  the \emph{Prelude/Overture} language framework. Judgements in the
  type theory are predicated on SMT verifications in a theory of
  finite fields, which supports precise and efficient analysis. Our
  approach is automated, compositional, scalable, and generalizes to arbitrary
  prime fields for data and key sizes.
\end{abstract} }
\begin{document}

\maketitle

\theabstract

\section{Introduction}

Data privacy is a critical concern in distributed applications
including privacy-preserving machine learning
\cite{li2021privacy,knott2021crypten,koch2020privacy,liu2020privacy}
and blockchains
\cite{ishai2009zero,lu2019honeybadgermpc,gao2022symmeproof,tomaz2020preserving}.
Secure multiparty computation (MPC) is a key software-based enabling
technology for these applications. MPC protocols provide both confidentiality
and integrity properties, formulated as real/ideal (aka simulator)
security and universal composability (UC) \cite{evans2018pragmatic},
with well-studied manual proof methods \cite{Lindell2017}.  MPC
security semantics has also been reformulated as
\emph{hyperproperties}
\cite{8429300,10.1145/3453483.3454074,skalka-near-ppdp24} to support
automated language-based proof methods. Recent research in the SMT
community has developed theories of finite fields \cite{SMFF} with
clear relevance to verification of cryptographic schemes that rely on
field arithmetic. The goal of this paper is to combine these two
approaches in a decidable type system for verifying correctness and
security properties of MPC protocols.

Our focus is on low-level MPC protocols. Previous high-level
MPC-enabled languages such as Wysteria \cite{rastogi2014wysteria} and
Viaduct \cite{10.1145/3453483.3454074} are designed to provide
effective programming of full applications, and incorporate
sophisticated compilation techniques such as orchestration
\cite{viaduct-UC} to guarantee high-level security properties. But
these approaches rely on \emph{libraries} of low-level MPC protocols,
such as binary and arithmetic circuits. These low-level protocols are
probabilistic, encapsulate abstractions such as secret sharing and
semi-homomorphic encryption, and are complementary to high-level
language design.

\subsection{Overview and Contributions}

Our work is based on the $\metaprot$/$\minicat$ framework developed in
previous work \cite{skalka-near-ppdp24}. The prior work developed the 
language and semantics, plus two mechanisms for verifying security: an
automated approach for protocols in $\mathbb{F}_2$ (the binary field),
based on enumerating all possible executions of the protocol, that scales 
only to very small protocols; and a manual approach based on a program logic.
In Section \ref{section-lang} we recall the $\minicat$ language model and key definitions. In
Section \ref{section-model} we recall formulations of program distributions,
and key probabilistic hyperproperties of MPC in $\minicat$. 

Together, our contributions \emph{automate} the verification of security
properties in $\metaprot$/$\minicat$. In this work, we build on~\cite{skalka-near-ppdp24}
by developing automatic verification approaches that work for arbitrary
finite fields and scale to large protocols. To accomplish this, we develop new
type systems for compositional verification, and use SMT to automate the checks
that were accomplished by enumeration in prior work. Specifically, we make the
following contributions:
\begin{enumerate}[i.]
\item In Section \ref{section-smt}, an interpretation of protocols in
  $\minicat$ as SMT constraints (Theorem \ref{theorem-toeq}) and
  methodology for verifying correctness properties.
\item In Section \ref{section-cpj}, a confidentiality type system for
  $\minicat$ protocols with a soundness property guaranteeing that
  information is released only through explicit declassifications
  (Theorem \ref{theorem-cpj}).
\item In Section \ref{section-ipj}, an integrity type system for
  $\minicat$ protocols with a soundness property guaranteeing
  robustness in malicious settings (Theorem \ref{theorem-ipj}).
\item In Section \ref{section-metalang}, a dependent Hoare type system and
  algorithm for the $\metaprot$ language that is sound for the
  confidentiality and integrity type systems (Theorem
  \ref{theorem-mtj}), with independently verified, compositional $\Pi$
  types for functions.
\end{enumerate}
In all cases, our type systems are boosted by verification mechanisms
provided by Satisfiability Modulo Finite Fields \cite{SMFF}. In
Section \ref{section-examples} we develop and discuss extended example
applications of our type analyses to real protocols including
the Goldreich-Micali-Wigderson (GMW), Bendlin-Damgard-Orland-Zakarias
(BDOZ), and Yao's Garbled Circuits (YGC) protocols.

\subsection{Related Work}
\label{section-related-work}

The goal of the $\minicat/\metaprot$ framework is to automate
correctness and security verification of low-level protocols.  Other
prior work has considered automated verification of high-level
protocols and manual verification of low-level protocols.
We summarize this comparison in Figure \ref{fig-comp-wrap}.


The most closely related work is \cite{skalka-near-ppdp24}, where
foundations for $\minicat/\metaprot$ are introduced including a
verification algorithm. However, this algorithm only works for
$\mathbb{F}_2$-- e.g., binary circuits-- and does not scale due to
exponential complexity, necessitating semi-automated proof
techniques. In contrast, our type systems scale to arbitrary prime
fields and whole program analysis for larger circuits (protocols). So
our work is a significant advancement of $\minicat/\metaprot$.  Our
work is also closely related to Satisfiability Modulo Finite Fields
\cite{SMFF} but only in the sense that we apply their method as we
discuss in Section \ref{section-smt} and subsequently.

Other low-level languages with probabilistic features for privacy
preserving protocols have been proposed. The
$\lambda_{\mathrm{obliv}}$ language \cite{darais2019language} uses a
type system to automatically enforce so-called probabilistic trace
obliviousness.  But similar to previous work on oblivious data
structures \cite{10.1145/3498713}, obliviousness is related to pure
noninterference, not the conditional form related to passive MPC
security (Definition \ref{definition-NIMO}). The Haskell-based security
type system in \cite{6266151} enforces a version of noninterference
that is sound for passive security, but does not consider malicious
security. And properties of real/ideal passive and malicious security
for a probabilistic language have been (manually) formulated in
EasyCrypt \cite{8429300}.

\compwrapfig

Several high-level languages have been developed for writing MPC
applications. Previous work on analysis for the
SecreC language \cite{almeida2018enforcing,10.1145/2637113.2637119} is
concerned with properties of complex MPC circuits, in particular a
user-friendly specification and automated enforcement of
declassification bounds in programs that use MPC in subprograms. The
Wys$^\star$ language \cite{wysstar}, based on Wysteria
\cite{rastogi2014wysteria}, has similar goals and includes a
trace-based semantics for reasoning about the interactions of MPC
protocols. Their compiler also guarantees that underlying
multi-threaded protocols enforce the single-threaded source language
semantics. These two lines of work were focused on passive
security. The Viaduct language \cite{10.1145/3453483.3454074} has a
well-developed information flow type system that automatically
enforces both confidentiality and integrity through hyperproperties
such as robust declassification, in addition to rigorous compilation
guarantees through orchestration \cite{viaduct-UC}. However, these
high level languages lack probabilistic features and other
abstractions of low-level protocols, the implementation and security
of which are typically assumed as a selection of library components.

Program logics for probabilistic languages and specifically reasoning
about properties such as joint probabilistic independence is also
important related work. Probabilistic Separation Logic (PSL)
\cite{barthe2019probabilistic} develops a logical framework for
reasoning about probabilistic independence (aka separation) in
programs, and they consider several (hyper)properties, such as perfect
secrecy of one-time-pads and indistinguishability in secret sharing,
that are critical to MPC. Lilac \cite{li2023lilac} extends this line
of work with formalisms for conditional independence which is 
also particularly important for MPC \cite{skalka-near-ppdp24}.

The Cryptol and SAW tools \cite{10.1007/978-3-319-48869-1_5} allow
programmers to program security protocols that are verified using
SMT. However they are used for cryptographic protocols more generally,
and are not designed specifically for MPC as our framework is.

\section{$\minicat$ Syntax and Operational Semantics}
\label{section-lang}

The $\minifed$ language establishes a basic model of synchronous
protocols between a federation of \emph{clients} exchanging values in
the binary field. We identify clients by natural numbers and federations- finite sets of
clients- are always given statically.  Our threat model assumes a
partition of the federation into \emph{honest} $H$ and \emph{corrupt}
$C$ subsets. We model probabilistic programming via a \emph{random
tape} semantics. That is, we will assume that programs can make
reference to values chosen from a uniform random distributions defined
in the initial program memory.  Programs aka protocols execute
deterministically given the random tape.

\subsection{Syntax}

\minicatsyntaxfig

The syntax of $\minifed$, defined in Figure \ref{fig-minicat-syntax},
includes values $v$ and standard operations of addition, subtraction,
and multiplication in a finite field $\mathbb{F}_p$ where $p$ is some
prime.  Protocols are given input secret values $\secret{w}$ as well
as random samples $\flip{w}$ on the input tape, implemented using a
\emph{memory} as described below (Section
\ref{section-lang-semantics}) where $w$ is a distinguishing 
identifier string. Protocols are sequences of assignment commands of three
different forms:
\begin{itemize}
\item $\eassign{\mesg{w}}{\cid_2}{\be}{\cid_1}$: This
  is a \emph{message send} where expression $\be$ is computed
  by client $\cid_1$ and sent to client $\cid_2$ as message
  $\mesg{w}$.
\item $\reveal{w}{\be}{\cid}$: This
  is a \emph{public reveal} where expression $\be$ is computed
  by client $\cid$ and broadcast to the federation, typically
  to communicate intermediate results for use in final output
  computations.
\item $\pubout{\cid}{\be}{\cid}$: This
  is an \emph{output} where expression $\be$ is computed
  by client $\cid$ and reported as its output. As a
  sanity condition we disallow commands
  $\pubout{\cid_1}{\be}{\cid_2}$ where $\cid_1\ne\cid_2$.
\end{itemize}
The distinction between
messages and broadcast public reveal is consistent with previous
formulations, e.g., \cite{6266151}. To identify and distinguish
between collections of variables in protocols we introduce the
following notation.
\begin{definition}
We let $x$ range over \emph{variables} which are identifiers where
client ownership is specified- e.g.,
$\elab{\mesg{\mathit{foo}}}{\cid}$ is a message $\mathit{foo}$ that
was sent to $\cid$. We let $X$ range over sets of variables, and more
specifically, $S$ ranges over sets of secret variables
$\elab{\secret{w}}{\cid}$, $R$ ranges over sets of random variables
$\elab{\flip{w}}{\cid}$, $M$ ranges over sets of message variables
$\elab{\mesg{w}}{\cid}$, $P$ ranges over sets of public variables
$\rvl{w}$, and $O$ ranges over sets of output variables $\out{\cid}$.
Given a program $\prog$, we write $\iov(\prog)$ to denote the
particular set $S \cup M \cup P \cup O$ of variables in $\prog$ and
$\secrets(\prog)$ to denote $S$, and we write $\flips(\prog)$ to
denote the particular set $R$ of random samplings in $\prog$. We write
$\vars(\prog)$ to denote $\iov(\prog) \cup \flips(\prog)$. For any set
of variables $X$ and clients $I$, we write $X_I$ to denote the subset
of $X$ owned by any client $\cid \in I$, in particular we write $X_H$
and $X_C$ to denote the subsets belonging to honest and corrupt
parties, respectively.
\end{definition}
In all cases we disallow overwriting of variables, since assignment is
intended to represent distinct message sends. An example of a
$\minicat$ protocol, for additive secret sharing, is discussed in
Section \ref{section-smt-toeq}.

\subsection{Semantics}
\label{section-lang-semantics}

\minicatredxfig

\emph{Memories} are fundamental to the semantics of $\fedcat$ and
provide random tape and secret inputs to protocols, and also record
message sends, public broadcast, and client outputs.
\begin{definition}
Memories $\store$ are finite (partial) mappings from variables $x$ to
values $v \in \mathbb{F}_p$.  The \emph{domain} of a memory is written
$\dom(\store)$ and is the finite set of variables on which the memory
is defined.  We write $\store\{ x \mapsto v\}$ for
$x\not\in\dom(\store)$ to denote the memory $\store'$ such that
$\store'(x) = v$ and otherwise $\store'(y) = \store(y)$ for all $y \in
\dom(\store)$. We write $\store \subseteq \store'$ iff $\dom(\store)
\subseteq \dom(\store')$ and $\store(x) = \store'(x)$ for all $x \in
\dom(\store)$. Given any $\store$ and $\store'$ with $\store(x) =
\store'(x)$ for all $x \in \dom(\store) \cap \dom(\store')$, we write
$\store \uplus \store'$ to denote the memory with domain $X =
\dom(\store) \cup \dom(\store')$ such that:
$$
\forall x \in X .
(\store \uplus \store')(x) =
\begin{cases} \store(x) \text{\ if\ } x\in\dom(\store) \\ \store'(x) \text{\ otherwise\ }\end{cases} 
$$
\end{definition}
In our subsequent presentation we will often want to consider arbitrary
memories that range over particular variables and to restrict
memories to particular subsets of their domain:
\begin{definition}
  Given a set of variables $X$ and memory $\store$, we write
  $\store_X$ to denote the memory with $\dom(\store_X) = X$ and
  $\store_X(x) = \store(x)$ for all $x \in X$. We define $\mems(X)$ as
  the set of all memories with domain $X$:
  $$
  \mems(X) \defeq \{ \store \mid \dom(\store) = X \}
  $$
\end{definition}
So for example, the set of all random tapes for
a protocol $\prog$ is $\mems(\flips(\prog))$, and the memory $\store_{\secrets(\prog)}$
is $\store$ restricted to the secrets in $\prog$.

Given a variable-free expression $\be$, we write $\cod{\be}$ to denote
the standard interpretation of $\be$ in the arithmetic field
$\mathbb{F}_{p}$. With the introduction of variables to expressions,
we need to interpret variables with respect to a specific memory, and
all variables used in an expression must belong to a specified client.
Thus, we denote interpretation of expressions $\be$ computed on a
client $\cid$ as $\lcod{\store,\be}{\cid}$. This interpretation is
defined in Figure \ref{fig-minicat-redx}, along with the big-step
reduction relation $\redx$ to evaluate commands. Reduction is a mapping
from \emph{configurations} $(\store,\prog)$ to final stores
where all three command forms- message send, broadcast, and
output- are implemented as updates to the memory $\store$.

\paragraph{Oblivious Transfer} A passive secure oblivious transfer (OT) protocol
based on previous work \cite{barthe2019probabilistic} can be defined
in $\minicat$, albeit assuming some shared randomness. Alternatively,
a simple passive secure OT can be defined with the addition of public
key cryptography as a primitive. But given the diversity of approaches
to OT, we instead assume that OT is abstract with respect to its
implementation, and any use is of the form
$\mx{w}{\cid_1} := \elab{\OT{\be_1}{\cid_1}{\be_2}{\be_3}}{\cid_2}$ --
given a \emph{choice bit} $\be_1$ provided by a receiver $\cid_1$, the
sender $\cid_2$ sends either $\be_2$
or $\be_3$.  Critically, $\cid_2$ learns nothing about $\be_1$ and
$\cid_1$ learns nothing about the unselected value. We return to
the formal interpretation of OT in Section \ref{section-smt}.

\subsection{$\minicat$ Adversarial Semantics}
  
\minicataredxfig

In the malicious model we assume that corrupt clients are in
thrall to an adversary $\adversary$ that does not necessarily follow
the rules of the protocol.  We model this by positing a $\arewrite$
function which is given a corrupt memory $\store_C$ and expression
$\be$, and returns a rewritten expression that can be interpreted to
yield a corrupt input. We define the evaluation relation that
incorporates the adversary in Figure \ref{fig-minicat-aredx}.

A key technical distinction of the malicious setting is that it
typically incorporates ``abort''. Honest parties implement strategies
to detect rule-breaking-- aka \emph{cheating}-- by using, e.g.,
message authentication codes with semi-homomorphic properties as in
BDOZ/SPDZ \cite{10.1007/978-3-030-68869-1_3}. If cheating is detected,
the protocol is aborted. To model this, we extend $\minifed$ with an
\ttt{assert} command:
$$
    \begin{array}{rcl@{\hspace{2mm}}r}
      \prog &::=& \cdots \mid \elab{\assert{\be = \be}}{\cid}
    \end{array}
$$
As we will discuss, this form of asserting equality between expressions
is sufficient to capture MAC checking and support integrity in protocols.
The big-step adversarial semantics relation $\aredx$ is then defined
in Figure \ref{fig-minicat-aredx} as a mapping from configurations to
a store $\store$ or $\bot$.

\subsubsection{Protocol Runs}

Our semantics require that random tapes contain values for all program
values $\elab{\flip{w}}{\cid}$ sampled from a uniform distribution
over $\mathbb{F}_p$. Input memories also contain input secret values
and possibly other initial view elements as a result of
pre-processing, e.g., Beaver triples for efficient multiplication,
and/or MACed share distributions as in BDOZ/SPDZ
\cite{evans2018pragmatic,10.1007/978-3-030-68869-1_3}. We define
$\runs(\prog)$ as the set of final memories resulting from execution
of $\prog$ given any initial memory, possibly augmented with a
preprocessed memory (a memory that satisfies a predicate $\preproc$),
and $\aruns(\prog)$ is the analog in the malicious setting. Since we
constrain programs to not overwrite any variable, we are assured that
final memories contain both a complete record of all initial secrets
as well as views resulting from communicated information. This
formulation will be fundamental to our consideration of probability
distributions and hyperproperties of protocols.
\begin{definition}
  \label{def-runs}
  Given $\prog$ with $\secrets(\prog) = S$ and $\flips(\prog) = R$ and
  predicate $\preproc$ on memories, define:
  $$
  \begin{array}{c}
    \runs(\prog) \defeq 
    \{ \store \mid \exists \store_1 \in \mems(S \cup R) . 
    \exists \store_2 . \preproc(\store_2) \wedge
    (\store_1 \uplus \store_2,\prog) \redxs \store \}
  \end{array}
  $$
  And similarly for any $\adversary$ define:
  $$
  \begin{array}{c}
    \aruns(\prog) \defeq 
    \{ \store \mid \exists \store_1 \in \mems(S \cup R) . 
    \exists \store_2 . \preproc(\store_2) \wedge
    (\store_1 \uplus \store_2,\prog) \aredxs \store \}
  \end{array}
  $$
\end{definition}

\section{Security Model}
\label{section-model}

MPC protocols are intended to implement some \emph{ideal
functionality} $\idealf$ with per-client outputs. In the $\minifed$
setting, given a protocol $\prog$ that implements $\idealf$, with
$\iov(\prog) = S \cup M \cup P \cup O$, the domain of $\idealf$ is
$\mems(S)$ and its range is $\mems(O)$. Security in the MPC setting
means that, given $\store \in \mems(S)$, a secure protocol $\prog$
does not reveal any more information about honest secrets $\store_H$
to parties in $C$ beyond what is implicitly declassified by
$\idealf(\sigma)$. Security comes in \emph{passive} and
\emph{malicious} flavors, wherein the adversary either follows the
rules or not, respectively. Characterization of both real world
protocol execution and simulation is defined probabilistically.  In
this work we will focus on the enforcement of hyperproperties of
passive and malicious security developed in related work
\cite{skalka-near-ppdp24}, that were shown to be sound for real/ideal
security properties that are traditionally used for MPC.  As in that
work we use probability mass functions to express joint dependencies
between input and output variables, as a metric of information
leakage.

\subsection{Probability Mass Functions (pmfs)} 

Pmfs map \emph{realizations} of variables in a joint
distribution to values in $[0..1]$. For convenience and following
\cite{barthe2019probabilistic,skalka-near-ppdp24} we use memories to
represent realizations, so for example given a pmf $\pmf$ over
variables $\{ \sx{x}{1}, \mx{y}{2} \}$ we write $\pmf(\{
\elab{\secret{x}}{1} \mapsto 0, \elab{\mesg{y}}{2} \mapsto 1 \})$ to
denote the (joint) probability that $\elab{\secret{x}}{1} = 0 \wedge
\elab{\mesg{y}}{2} = 1$. Recall from Section
\ref{section-lang-semantics} that $\uplus$ denotes the combination of
memories, so for example $\{ \elab{\secret{x}}{1} \mapsto 0\} \uplus
\{\elab{\mesg{y}}{2} \mapsto 1 \} = \{ \elab{\secret{x}}{1} \mapsto 0,
\elab{\mesg{y}}{2} \mapsto 1 \}$.
\begin{definition}
  A \emph{probability mass function} $\pmf$ is a function
  mapping memories in $\mems(X)$ for given variables $X$ to
  values in $\mathbb{R}$ such that:
  $$
  \sum_{\store \in \mems(X)} \pmf(\store) \  = \ 1
  $$
\end{definition}
Now, we can define a notion of marginal and conditional
distributions as follows, which are standard for discrete
probability mass functions. 
\begin{definition}
  Given $\pmf$ with $\dom(\pmf) = \mems(X_2)$, the \emph{marginal distribution}
  of variables $X_1 \subseteq X_2$ in $\pmf$ is denoted $\margd{\pmf}{X_1}$ and defined as follows:
  $$
  \forall \store \in \mems(X_1) \quad . \quad \margd{\pmf}{X_1}(\store) \defeq
  \sum_{\store' \in \mems(X_2-X_1)} \pmf(\store \uplus \store')
  $$
\end{definition}

\begin{definition}
  Given $\pmf$, the \emph{conditional distribution}
  of $X_1$ given $X_2$ where $X_1 \cup X_2 \subseteq \dom(\pmf)$ and $X_1 \cap X_2 = \varnothing$
  is denoted $\condd{\pmf}{X_1}{X_2}$ and defined as follows:
  $$
  \forall \store \in \mems(X_1 \cup X_2)\ .\ 
  \condd{\pmf}{X_1}{X_2}(\store) \defeq
  \begin{cases}
    \begin{array}{ll}
      0 & \text{\ if\ } \margd{\pmf}{X_2}(\store_{X_2}) = 0\\
      \margd{\pmf}{X_1 \cup X_2}(\store) / \margd{\pmf}{X_2}(\store_{X_2})\ & \text{\ o.w.}
     \end{array}
  \end{cases}
  $$
\end{definition}
We also define some convenient syntactic sugarings. The first will allow us to
compare marginal distributions under different realization conditions
(as in, e.g., Definition \ref{definition-NIMO}), the others are standard.
\begin{definition}
  Given $\pmf$, for all $\store_1 \in \mems(X_1)$ and $\store_2 \in \mems(X_2)$ define:
  \begin{enumerate}
  \item $\condd{\pmf}{X_1}{\store_2}(\store_1) \defeq \condd{\pmf}{X_1}{X_2}(\store_1 \uplus \store_2)$
  \item $\pmf(\store_1)  \defeq \margd{\pmf}{X_1}(\store_1)$ 
  \item $\pmf(\store_1|\store_2) \defeq \condd{\pmf}{X_1}{X_2}(\store_1 \uplus \store_2)$
  \end{enumerate}
\end{definition}
As in previous work \cite{darais2019language,barthe2019probabilistic,skalka-near-ppdp24,li2023lilac}, 
probabilistic independence, aka separation, is an important concept and we adopt the
following standard notation to express conditional and unconditional separation:
\begin{definition}
  We write $\condsep{\pmf}{X_1}{X_2}{X_3}$ iff for all
    $\store \in \mems(X_1 \cup X_2)$ and $\store' \in \mems(X_3)$ we have
  $\pmf(\store|\store') =
  \pmf(\store_{X_1}|\store') * \pmf(\store_{X_2}|\store')$. If $X_1$ is empty
  we write $\sep{\pmf}{X_2}{X_3}$.
\end{definition}

\subsection{Basic Distribution of a Protocol}

We treat all elements of $\runs(\prog)$ as equally likely.  This
establishes the basic program distribution that can be marginalized
and conditioned to quantify input/output information dependencies.
\begin{definition}
  \label{def-progtt}
  \label{def-progd}
  \label{definition-progd}
  The \emph{basic distribution of $\prog$}, written $\progtt(\prog)$, is
  defined such that for all $\store \in \mems(\vars(\prog))$:
  $$
  \progtt(\prog)(\store) =  1 / |\runs(\prog)| \ \text{if}\ \store \in \runs(\prog), \text{otherwise}\ 0
  $$
  For any $\adversary$ the basic \emph{$\adversary$ distribution of $\prog$}, written
  $\progtt(\prog,\adversary)$, is
  defined such that for all $\store \in \mems(\iov(\prog) \cup R)$:
  $$
  \progtt(\prog,\adversary)(\store) =  1 / |\botruns(\prog)| \ \text{if}\ \store \in \botruns(\prog), \text{otherwise}\ 0
  $$
  where $\botruns$ pads out undefined views and outputs with $\bot$:
  $$
  \begin{array}{c}
    \botruns(\prog) \defeq \\
    \{ \store\{ x_1 \mapsto \bot, \ldots, x_n \mapsto \bot \} \mid 
    \store \in \aruns(\prog) \wedge \{ x_1,\ldots,x_n\} = (V \cup O) - \dom(\store) \}
  \end{array}
  $$
\end{definition}

\subsection{Honest and Corrupt Views}

Information about honest secrets can be revealed to corrupt clients
through messages sent from honest to corrupt clients, and through
publicly broadcast information from honest clients. Dually,
corrupt clients can impact protocol integrity through the messages
sent from corrupt to honest clients, and through publicly broadcast information
from corrupt clients. We call the former \emph{corrupt views}, and
the latter \emph{honest views}. We let $V$ range over sets
of views, i.e., subsets of $M \cup P$.
\begin{definition}[Corrupt and Honest Views]
  We let $V$ range over \emph{views} which are sets of messages
  and reveals. Given a program $\prog$ with $\iov(\prog) = S \cup M \cup P \cup O$,
  define $\views(\prog) \defeq M \cup P$, and define $\houtputs$ as
  the messages and reveals in $V = M \cup P$ sent from honest to corrupt
  parties, called \emph{corrupt views}:
  $$
  \begin{array}{lcl}
    \houtputs & \defeq
        & \{\ \rvl{w} \mid\ \reveal{w}{\be}{\cid} \in \prog \wedge \cid \in H \ \}\ \cup \\
      & & \{\ \elab{\mesg{w}}{\cid}\ \mid\  \eassign{\mesg{w}}{\cid}{\be}{\cid'} \in
           \prog \wedge \cid \in C \wedge \cid' \in H \ \} 
  \end{array}
  $$
  and similarly define $\cinputs$ as the subset of $V$ sent from corrupt to honest
  parties, called \emph{honest views}:
  $$
  \begin{array}{lcl}
    \cinputs &  \defeq
        & \{\ \rvl{w} \mid\ \reveal{w}{\be}{\cid} \in \prog \wedge \cid \in C \ \} \ \cup\\
      & & \{\ \elab{\mesg{w}}{\cid}\ \mid\  \eassign{\mesg{w}}{\cid}{\be}{\cid'} \in
              \prog \wedge \cid \in H \wedge \cid' \in C \ \}
  \end{array}
  $$
\end{definition}

\subsection{Hyperproperties of Confidentiality and Integrity}

In this section we restate hyperproperties developed in previous work
\cite{skalka-near-ppdp24}. The reader is referred to that paper for a more
thorough discussion of their relation to real/ideal security. In this
paper, we leverage that relation to establish security properties in
our type system.

\subsubsection{Confidentiality}

Since MPC protocols release some information about secrets through
outputs of $\idealf$, they do not enjoy strict noninterference.
Public reveals and protocol outputs are fundamentally forms of
declassification. But the following property of probabilistic
noninterference \emph{conditioned on output} is sound for passive
security \cite{skalka-near-ppdp24}.
\begin{definition}[Noninterference modulo output]
  \label{definition-NIMO}
  We say that a program $\prog$ with $\iov(\prog) = S \cup V \cup O$
  satisfies \emph{noninterference modulo output}
  iff for all $H$ and $C$ and $\store_1 \in \mems(S_C \cup O)$ and $\store_2 \in \mems(\houtputs)$
  we have:
  $$
  \condd{\progtt(\prog)}{S_H}{\store_1} = \condd{\progtt(\prog)}{S_H}{\store_1 \uplus \store_2}
 $$
\end{definition}
This conditional noninterference property implies that
corrupt views give the adversary no better chance of guessing honest
secrets than just the output and corrupt inputs do.

In practice, MPC protocols typically satisfy a \emph{gradual
release} property \cite{sabelfeld2009declassification}, where messages
exchanged remain probabilistically separable from secrets, with only
declassification events (reveals and outputs) releasing information
about honest secrets.  
\begin{definition}
  Given $H,C$, a protocol $\prog$ with $\iov(\prog) = S \cup M \cup P \cup O$
  satisfies \emph{gradual release} iff
  $\sep{\progtt(\prog)}{M_C}{S_H}$.
\end{definition}

\subsubsection{Integrity}

\emph{Integrity} is an important hyperproperty in security models that
admit malicious adversaries. Consistent with formulations in
deterministic settings, we define protocol integrity as the
preservation of high equivalence (of secrets and views). Intuitively,
this property says that any adversarial strategy either ``mimics'' a
passive strategy with some choice of inputs or causes an abort.
\begin{definition}[Integrity]
  \label{def-integrity}
  We say that a protocol $\prog$ with $\iov(\prog) = S \cup V \cup O$ has
  \emph{integrity} iff for all $H$, $C$, and $\adversary$,
  if $\store \in \aruns(\prog)$ 
  then there exists $\store' \in \mems(S)$ with $\store_{S_H} = \store'_{S_H} $ and:
    $$
    \condd{\progtt(\prog,\adversary)}{X}{\store_{S_H \cup \cinputs}} =
    \condd{\progtt(\prog)}{X}{\store'}
    $$ 
  where $X \defeq (\houtputs \cup O_H) \cap \dom(\store)$. 
\end{definition}
Integrity plus noninterference modulo output implies malicious security
in the real/ideal model \cite{skalka-near-ppdp24}, and integrity plus
gradual release is a probabilistic form of robust declassification
\cite{sabelfeld2009declassification}.

\section{$\minicat$ Constraint Verification}
\label{section-smt}

In previous work it was shown that the semantics of $\minicat$ in
$\mathbb{F}_2$ can be implemented with Datalog
\cite{skalka-near-ppdp24}, which is a form of constraint
programming. In this paper we extend this idea to arbitrary prime
fields by using a more general form of SMT constraint programming,
Satisfiability Modulo Finite Fields \cite{SMFF}. As we
will show in subsequent sections, this interpretation supports
correctness guarantees, and also static type analyses for enforcing
confidentiality and integrity properties.

\subsection{Constraint Satisfiability Modulo Finite Fields}

We introduce the following syntax of SMT-style constraints
for $\minicat$:
$$
\begin{array}{rcl}
  \phi ::= x \mid \phi \fplus \phi \mid \phi \fminus \phi \mid \phi \ftimes \phi  \qquad
  \eqs ::= \phi \eop \phi \mid \eqs \wedge \eqs \mid \neg\eqs
\end{array}
$$
%
Note that constraint terms $\phi$ are similar to expressions $\be$
except that $\phi$ can ``cross party lines''. This is needed
to express correctness properties-- for example, in the
Goldreich-Micali-Wigderson (GMW) protocol wire values in circuits are
represented by reconstructive shares \cite{evans2018pragmatic}.  If by
convention shares of values $n$ are represented by $\mesg{n}$ on
clients, then assuming two clients $\setit{1,2}$ the reconstructed
value can be expressed as $\mx{n}{1} + \mx{n}{2}$.  So, while summing
values across clients is disallowed in $\minicat$ protocols, we can
express properties of shared values via constraints.

%
%
%

This language of constraints has an obvious direct interpretation in
Satisfiability Modulo Finite Fields \cite{SMFF}. We can leverage this
to implement a critical \emph{entailment} property, written $\eqs_1
\models \eqs_2$.  Our entailment relation is based on satisfiability
in a standard sense, where we represent models as memories $\store$
(mappings from variables to values).
\begin{definition}
  We write $\store \models \eqs$ iff $\store$ satisfies, aka is a model
  of, $\eqs$. We write $\eqs_1 \models
  \eqs_2$ iff  $\store \models E_1$ implies $\store \models
  E_2$ for all $\store$, and note that this relation is a preorder.
\end{definition}
Given this Definition, the following Theorem is well-known and a fundamental
technique in SMT to implement our (common) notion of entailment. 
\begin{theorem}
  $\eqs_1 \models \eqs_2$ iff $\eqs_1 \wedge \neg\eqs_2$ is
  not satisfiable.
\end{theorem}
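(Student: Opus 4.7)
The plan is to prove the biconditional by unfolding the semantic definitions and then applying the classical equivalence between universal implication and the non-existence of a counterexample. The result is essentially the standard duality between entailment and unsatisfiability, so I expect this to be a short, direct argument rather than something requiring deep machinery from finite fields.

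First I would unfold the definition of $\eqs_1 \models \eqs_2$ to the statement ``for every memory $\store$, $\store \models \eqs_1$ implies $\store \models \eqs_2$''. By classical propositional reasoning this is equivalent to ``there is no memory $\store$ with $\store \models \eqs_1$ and $\store \not\models \eqs_2$''. Next I would invoke the standard semantic clauses for the connectives $\wedge$ and $\neg$: $\store \not\models \eqs_2$ iff $\store \models \neg\eqs_2$, and $\store \models \eqs_1$ together with $\store \models \neg\eqs_2$ iff $\store \models \eqs_1 \wedge \neg\eqs_2$. Combining these gives: no memory satisfies $\eqs_1 \wedge \neg\eqs_2$, which is exactly the definition of unsatisfiability for $\eqs_1 \wedge \neg\eqs_2$.

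Since every step is a biconditional, chaining them yields the desired ``iff''. The only mild subtlety is that the semantics of $\neg$ and $\wedge$ are not explicitly spelled out in the excerpt; I would state them briefly as the expected classical interpretation (a memory satisfies $\neg\eqs$ exactly when it does not satisfy $\eqs$, and satisfies $\eqs_1 \wedge \eqs_2$ exactly when it satisfies both). There is no real obstacle here: the argument is purely propositional, independent of the choice of prime field, and requires no properties of the term grammar $\phi$ beyond the fact that its interpretation under a memory is a total function into $\mathbb{F}_p$.
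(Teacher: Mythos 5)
Your proof is correct and is exactly the standard argument the paper implicitly relies on: the paper states this theorem as a well-known fact without giving a proof, and your unfolding of entailment plus the classical semantics of $\neg$ and $\wedge$ is the intended justification. The only point worth noting is the one you already flag--that satisfaction of $\neg\eqs$ and $\eqs_1\wedge\eqs_2$ is the expected classical interpretation over memories--which is indeed how the paper's constraint semantics is meant to be read.
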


\subsection{Programs as Constraint Systems}
\label{section-smt-toeq}

\begin{fpfig}[t]{Interpretation of $\minicat$ expressions (T) and programs (B) as
  constraints}{fig-toeq}
\small{
\begin{mathpar}
  \toeq{x} = x

  \toeq{\elab{\be_1 \fplus \be_2}{\cid}} = (\toeq{\elab{\be_1}{\cid}} \fplus \toeq{\elab{\be_2}{\cid}})

  \toeq{\elab{\be_1 \fminus \be_2}{\cid}} = (\toeq{\elab{\be_1}{\cid}} \fminus \toeq{\elab{\be_2}{\cid}})

  \toeq{\elab{\be_1 \ftimes \be_2}{\cid}} = (\toeq{\elab{\be_1}{\cid}} \ftimes \toeq{\elab{\be_2}{\cid}})
\end{mathpar}
\begin{mathpar}
  \toeq{\xassign{x}{\be}{\cid}} = (x \eop \toeq{\elab{\be}{\cid}})
  
  \toeq{\elab{\assert{\be_1 = \be_2}}{\cid}} =  (\toeq{\elab{\be_1}{\cid}} \eop \toeq{\elab{\be_2}{\cid}})

  \toeq{\prog_1;\prog_2} = (\toeq{\prog_1} \wedge \toeq{\prog_2})
\end{mathpar}}
\end{fpfig}

A central idea of our approach is that we can interpret any protocol
$\prog$ as a set of equality constraints (denoted $\toeq{\prog}$) and use an SMT
solver to verify properties relevant to correctness, confidentiality,
and integrity. Further, we can leverage entailment for tractability--
we can use annotations to obtain a weakened precondition for relevant properties.
That is, given $\prog$, program annotations or other cues can be used
to find a minimal $\eqs$ with $\toeq{\prog} \models \eqs$ for verifying
correctness and security.

The mapping $\toeq{\cdot}$ from programs $\prog$ to constraints is
defined in Figure \ref{fig-toeq}. The interpretation of OT is omitted
from this figure which is general. For $\mathbb{F}_2$ (where we allow
negation $\neg$ of expressions) the interpretation is:
\begin{mathpar}
  \toeq{\elab{\OT{\be_1}{\cid_1}{\be_2}{\be_3}}{\cid_2}} =
  (\toeq{\elab{\be_1}{\cid_1}} \ftimes \toeq{\elab{\be_3}{\cid_2}}) \fplus
  (\neg\toeq{\elab{\be_1}{\cid_1}} \ftimes \toeq{\elab{\be_2}{\cid_2}}) 
\end{mathpar}
In general we will assume that any preprocessing predicate is defined
via a constraint $\eqspre$, i.e., $\preproc(\store) \iff \store
\models \eqspre$ for all $\store$.  The correctness of the
SMT interpretation of programs is characterized as follows.
\begin{theorem}
  \label{theorem-toeq}
  $\store$ is a model of $\eqspre \wedge \toeq{\prog}$ iff $\store \in \runs(\prog)$.
\end{theorem}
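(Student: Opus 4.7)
The plan is to prove the biconditional by structural induction on $\prog$, bootstrapping on an auxiliary expression-level lemma. The overarching intuition is that $\toeq{\cdot}$ simply transliterates each assignment into the obvious equational constraint relating $x$ to $\be$'s value, so the content of the theorem is that the operational semantics ``computes exactly what the constraints force.'' The no-overwrite discipline on $\minicat$ variables is what makes the correspondence clean: every variable assigned during execution appears exactly once on the left of an equation in $\toeq{\prog}$, so models of $\toeq{\prog}$ are in bijection with terminating runs.

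First I would prove an expression lemma: for every $\be$, client $\cid$, and memory $\store$ whose domain contains all variables of $\be$, the value $\lcod{\store,\be}{\cid}$ coincides with the $\mathbb{F}_p$-value obtained by evaluating the constraint term $\toeq{\elab{\be}{\cid}}$ under $\store$. This is an easy induction on $\be$ using Figure \ref{fig-toeq} and Figure \ref{fig-minicat-redx}: the variable and value cases are immediate, and the $\fplus,\fminus,\ftimes$ cases follow directly from the compositional definition of both $\lcod{\cdot}{\cid}$ and $\toeq{\cdot}$. For OT the stated interpretation in $\mathbb{F}_2$ (and its analog for general fields) is verified by case analysis on the choice bit.

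Next, the main induction on $\prog$. For the base case $\prog = \xassign{x}{\be}{\cid}$, the single rule in Figure \ref{fig-minicat-redx} gives $(\store_0,\prog)\redx\store_0\{x\mapsto \lcod{\store_0,\be}{\cid}\}$. Combining this with the expression lemma, a memory $\store$ with appropriate domain is in $\runs(\prog)$ precisely when $\store(x) = \lcod{\store,\be}{\cid}$ and $\store_{S\cup R}$ realizes some valid input/tape plus a preprocessed submemory, i.e., precisely when $\store\models \eqspre\wedge(x\eop\toeq{\elab{\be}{\cid}})$; here we use that $x$ does not occur in $\be$ so the interpretation of $\toeq{\elab{\be}{\cid}}$ is the same under $\store_0$ and under $\store_0\{x\mapsto v\}$. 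The assert command is handled analogously. For the inductive case $\prog = \prog_1;\prog_2$: the Seq rule gives a unique intermediate memory $\store'$ on the execution side, and $\toeq{\prog_1;\prog_2} = \toeq{\prog_1}\wedge\toeq{\prog_2}$ on the constraint side. By the no-overwrite property, $\dom(\store')\cap\dom_{\mathrm{new}}(\prog_2)=\varnothing$, so $\store'$ is just the restriction of the final $\store$ to the variables already defined after $\prog_1$. The two induction hypotheses then glue: for $(\Leftarrow)$, $\store\models\toeq{\prog_1}$ lets us reconstruct the $\prog_1$-run ending in $\store'$, after which $\store\models\toeq{\prog_2}$ supplies the $\prog_2$-run from $\store'$ to $\store$; for $(\Rightarrow)$, a run factors as two subruns whose final memories each satisfy the respective subconstraints, and joining the models yields $\store\models\toeq{\prog_1}\wedge\toeq{\prog_2}$.

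The main obstacle I anticipate is the bookkeeping around domains and memory extension: one must carefully argue that a constraint-model $\store$ can be \emph{split} into an initial part $\store_1\uplus\store_2$ (satisfying $\eqspre$) plus the values produced by assignments, so that the witnessing initial memory required by Definition \ref{def-runs} exists. This is routine once one observes that $\toeq{\prog}$ constrains only the assigned variables $x$ via their defining equations and leaves $S\cup R$ plus the preprocessed domain entirely free, so the $\eqspre$-satisfying submemory is exactly $\store_2$ and the rest is $\store_1$. With that observation in hand the two directions follow without further subtlety, and the theorem assembles from the base and inductive steps above.
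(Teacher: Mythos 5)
The paper never actually proves Theorem \ref{theorem-toeq}---it is stated in Section \ref{section-smt-toeq} and used downstream, but the Appendix only contains proofs for Theorems \ref{theorem-cpj}, \ref{theorem-ipj}, and \ref{theorem-mtj}---so there is no official argument to compare against. Your proposal is the natural proof and is essentially correct: an expression-level lemma identifying $\lcod{\store,\be}{\cid}$ with the field interpretation of $\toeq{\elab{\be}{\cid}}$ under $\store$, followed by induction on the structure of $\prog$, with the no-overwrite discipline guaranteeing that the final memory of a run agrees with every intermediate memory and hence satisfies every defining equation, and conversely that a model of $\eqspre \wedge \toeq{\prog}$ can be split into $\store_1 \in \mems(S \cup R)$, a $\preproc$-satisfying $\store_2$, and the values forced by the assignments.

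Two caveats, both inherited from the paper's own informality rather than introduced by you, deserve explicit mention if this were written out. First, the biconditional as literally stated needs a well-formedness hypothesis that every variable read by an expression is defined at that point (by an earlier assignment, the tape, or preprocessing) and a convention fixing $\dom(\store)$ to $\vars(\prog)$ plus the preprocessed domain; otherwise a protocol that reads a never-assigned message has $\runs(\prog) = \varnothing$ while $\toeq{\prog}$ remains satisfiable, and a model over a larger domain cannot be reproduced by any run. Your expression lemma quietly assumes the former and your remark about "appropriate domain" gestures at the latter, but the left-to-right direction of the splitting argument is exactly where these hypotheses are load-bearing. Second, "the assert command is handled analogously" is slightly off as phrased: Figure \ref{fig-minicat-redx} gives no honest rule for $\assert{\cdot}$ (it appears only in $\aredx$), and an assert filters runs rather than extending the memory; one must posit the evident honest rule (step to $\store$ unchanged iff the two sides are equal), under which the correspondence with $\toeq{\elab{\assert{\be_1=\be_2}}{\cid}}$ is indeed immediate. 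Similarly the OT case requires committing to an operational reading matching the stated $\mathbb{F}_2$ constraint, which the paper leaves abstract. With those conventions made explicit, your induction goes through as described.
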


\subsubsection{Example: Correctness of 3-Party Addition}
Consider the following 3-party additive secret sharing protocol, where
each party generates two additive secret shares
\cite{evans2018pragmatic} for their respective secrets. Each party
publicly reveals the sums all their shares, and the sum of those is
the protocol output for each party.  {\footnotesize
$$
\begin{array}{c@{\qquad}c}
\begin{array}{lll}
  \elab{\mesg{s1}}{2} &:=& \elab{(\secret{1} \fminus \locflip \fminus \flip{x})}{1} \\ 
  \elab{\mesg{s1}}{3} &:=& \elab{\flip{x}}{1} \\ 
  \elab{\mesg{s2}}{1} &:=& \elab{(\secret{2} \fminus \locflip \fminus \flip{x})}{2} \\ 
  \elab{\mesg{s2}}{3} &:=& \elab{\flip{x}}{2} \\ 
  \elab{\mesg{s3}}{1} &:=& \elab{(\secret{3} \fminus \locflip \fminus \flip{x})}{3} \\ 
  \elab{\mesg{s3}}{2} &:=& \elab{\flip{x}}{3}
\end{array} & 
\begin{array}{lll}
  \rvl{1} &:=& \elab{(\locflip \fplus \mesg{s2} \fplus \mesg{s3})}{1} \\ 
  \rvl{2} &:=& \elab{(\mesg{s1} \fplus \locflip \fplus \mesg{s3})}{2} \\
  \rvl{3} &:=& \elab{(\mesg{s1} \fplus \mesg{s2} \fplus \locflip)}{3} \\
  \out{1} &:=& \elab{(\rvl{1} \fplus \rvl{2} + \rvl{3})}{1}\\
  \out{2} &:=& \elab{(\rvl{1} \fplus \rvl{2} + \rvl{3})}{2}\\
  \out{3} &:=& \elab{(\rvl{1} \fplus \rvl{2} + \rvl{3})}{3}
\end{array}
\end{array}
$$}
Letting $\prog$ be this protocol, we can verify correctness
as follows-- that is, the output sum of all shares is the sum of the
three input secrets:
$$
\toeq{\prog}\ \models\ \out{1} \eop \sx{1}{1} \fplus \sx{2}{2} \fplus \sx{3}{3}
$$

\section{Confidentiality Types}
\label{section-cpj}

It is well-known that adding or subtracting a sample from a uniform
distribution in a finite field yields a value in a uniform
distribution, meaning that samples can be used as one-time-pads with
perfect secrecy \cite{barthe2019probabilistic,darais2019language}.
In our type system for confidentiality, we aim to approximate
distributions by tracking which program variables an expression
may depend on. But we also want to capture this fundamental mechanism
of encryption. As we will show, SMT can be used in this and
other contexts to support confidentiality properties of protocols.

For example, given some string $y$ and a message send such as:
$$
\xassign{\mx{w}{1}}{\secret{w} \ftimes \flip{w}}{2}
$$
we would assign the following type to denote that $\mx{w}{1}$ is dependent on
both $\sx{w}{2}$ and $\rx{w}{2}$:
$$
\mx{w}{1} : \setit{\sx{w}{2}, \rx{w}{2}}
$$
But in the case of 2-party reconstructive sharing:
$$
\xassign{\mx{w}{1}}{\secret{w} \fminus \flip{w}}{2}
$$
we want the type of $\mx{w}{1}$ to be an ``encrypted'' type that is
by itself independent of $\sx{w}{2}$. However, we also want to track
the \emph{joint} dependence of $\mx{w}{2}$ on $\rx{w}{2}$ and
$\sx{w}{2}$, in case information about $\rx{w}{2}$ is subsequently
leaked.  In short, we assign $\mx{w}{1}$ a ``ciphertext type'', aka a
\emph{ciphertype}:
$$
\mx{w}{1} : \cty{\rx{w}{2}}{\setit{\sx{w}{2}}}
$$
indicating syntactically that $\mx{w}{1}$ is an encryption of a
value dependent on $\sx{w}{2}$ with $\rx{w}{2}$.

Of course, there are other methods for encrypting values in MPC
protocols-- but we can observe that many are algebraically equivalent
to this fundamental schema. For example, in Yao's Garbled Circuits
(YGC) in $\mathbb{F}_{2}$, the ``garbler'' represents encrypted wire
values as a random sample (for 1) or its negation (for 0), and shares
its secret input encoding with the ``evaluator'' by using
it as a selection bit. That is, assuming that client 2 is the garbler and
client 1 is the evaluator, we can define:
$$
\xassign{\mx{w}{1}}{\mux{\secret{w}}{\neg\flip{w}}{\flip{w}}}{2}
$$
where for all $\be_1,\be_2,\be_3$ (with $\neg$ denoting negation in $\mathbb{F}_2$):
$$
\mux{\be_1}{\be_2}{\be_3} \defeq (\neg\be_1 \ftimes \be_2) \fplus (\be_1 \ftimes \be_3)
$$
and letting this protocol be $\prog$ the following is valid:
\begin{mathpar}
   \toeq{\prog} \models \mx{w}{1} \eop \neg\sx{w}{2} \fplus \rx{w}{2}
\end{mathpar}
This allows us to assign the same ciphertype to $\mx{w}{1}$ as above--
that is, $\cty{\rx{w}{2}}{\setit{\sx{w}{2}}}$.  Since this is a
one-time-pad scheme, it is also important that we ensure that samples
are used at most once for encryption. As in previous work 
\cite{darais2019language} we use type linearity to enforce this.

\cpjfig

The syntax of types $\ty$ and type environments $\Gamma$ are presented
in Figure \ref{fig-cpj}\footnote{In general we use $2^{\chi}$ to
denote the powerset of terms in any grammatical sort $\chi$.}. We also
define type judgements for expressions and programs by direct
interpretation as terms $\phi$ and constraints $\eqs$ respectively,
via the encoding $\toeq{\cdot}$. The \TirName{Encode} rule enforces
linearity by requiring that variables used for encryption are added to
judgements, and guaranteed to not be used elsewhere via a form of
union that ensures disjointness of combined sets:
\begin{definition}
  $R_1;R_2 = R_1 \cup R_2$ if $R_1 \cap R_2 = \varnothing$ and is otherwise
  undefined.
\end{definition}
The \TirName{DepTy} rule handles all cases and just records the
dependencies on variables in the term, a conservative approximation of
dependencies.  The \TirName{Send} rule records the types of messages,
reveals, and outputs in environments, and the \TirName{Seq} rule
combines the types of sequenced programs.  Given this we define
validity of program type judgements as follows.
\begin{definition}
  Given preprocessing predicate $\eqspre$ and protocol $\prog$ we say
  $\cpj{R}{\eqs}{\eqspre \wedge \toeq{\prog}}{\Gamma}$ is \emph{valid} iff it is derivable and
  $\eqspre \wedge \toeq{\prog} \models \eqs$.
\end{definition}
It is important to note that this presentation of derivations is
logical, not algorithmic-- in particular it is not syntax-directed
due to the \TirName{Encode} rule.

\subsection{Detecting Leakage Through Dependencies}

Given a valid typing, we can leverage structural type information to
conservatively approximate the information accessible by the
adversary who has control over all corrupt clients in $C$ given
a partitioning $H,C$. In particular, if the adversary has access to an
encrypted value and another value that is dependent on its one-time-pad
key-- i.e., access to values with types of the forms:
$$
\setit{\cty{\rx{w}{\cid}}{\ty}} \qquad \setit{\rx{w}{\cid}, \ldots} 
$$
then we can conservatively assume that the adversary can decrypt the
value of type $\ty$ and thus have access to those values.

In Figure \ref{fig-leakj} we define derivation rules for the
``leakage'' judgement $\leakj{\Gamma}{C}{\ty}$ which intuitively says
that given a protocol with type environment $\Gamma$ and corrupt
clients $C$, the adversary may be able to observe dependencies on
values of type $\ty$ in their adversarial messages.  The real aim of
this relation is to determine if
$\leakj{\Gamma}{C}{\setit{\sx{w}{\cid}}}$ with $\cid \in H$-- that is,
if the adversary may observe dependencies on honest secrets.

Our main type soundness result combines typing and leakage judgements,
and guarantees gradual release of protocols-- that is, that adversarial
messages reflect no dependencies on honest secrets. Proof details are
provided in an Appendix.
\begin{theorem}
  \label{theorem-cpj}
  Given $\prog$ with $\views(\prog) = M \cup P$, if $\cpj{R}{\eqs}{\eqspre \wedge \toeq{\prog}}{\Gamma}$
  is valid and for all $H,C$ we have $\leakj{\Gamma}{M_C}{T}$ with $T$ closed
  and $S_H \cap T = \varnothing$, then $\prog$ satisfies gradual release.
\end{theorem}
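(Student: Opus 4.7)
The goal is to establish gradual release, $\sep{\progtt(\prog)}{M_C}{S_H}$, from the syntactic condition that no honest secret lies in the closure $T$ of $M_C$ under the leakage rules. The plan proceeds in two stages: first, an individual-run dependency soundness obtained from the typing derivation, then a probabilistic lifting that turns syntactic masking by fresh random samples into statistical independence. Theorem \ref{theorem-toeq} is the crucial bridge, because it identifies $\runs(\prog)$ with the models of $\eqspre \wedge \toeq{\prog}$, so the entailments used in \TirName{Encode} and \TirName{Send} hold pointwise on actual runs.

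For the first stage, I would prove by induction on the derivation of $\cpj{R}{\eqs}{\toeq{\prog}}{\Gamma}$ the semantic invariant that for every $\store \in \runs(\prog)$ and every $(x : \ty) \in \Gamma$, the value $\store(x)$ is determined by the restriction of $\store$ to the variables syntactically appearing in $\ty$, and that a ciphertype $\cty{r}{\ty'}$ additionally records the algebraic identity $\store(x) = \store(\phi') \pm \store(r)$ for some $\phi'$ dependent on $\ty'$. A parallel induction on the closure rules in Figure \ref{fig-leakj} would show that if $\leakj{\Gamma}{M_C}{T}$ and $T$ is closed, then every $y \in T$ is itself a function, over $\runs(\prog)$, of the $M_C$-values of $\store$: the $\leadsto$ step for messages unfolds one level of dependency, and the ciphertext-decomposition step unlocks $\ty'$ exactly when both $x$ and $r$ have entered the closure, matching the semantic content of the ciphertype.

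For the second stage, I would exploit linearity enforced by the disjoint union $R_1;R_2$: each random sample $r$ used as a pad in a ciphertype is used for exactly one encryption, so across $\runs(\prog)$ its marginal is uniform on $\mathbb{F}_p$ and independent of every other variable other than the one ciphertext it masks. By Definition \ref{definition-progd}, $\progtt(\prog)$ is uniform over $\runs(\prog)$, so I would translate statistical independence of $S_H$ and $M_C$ into a counting statement: the number of runs compatible with any fixed $(\store_{S_H}, \store_{M_C})$ factors as a product of marginal counts. The main obstacle will be making this counting argument respect the recursive structure of the leakage closure, because a key $r$ can itself be a ciphertype whose plaintext might depend on honest secrets via further chains. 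I would resolve this by inducting on the number of $\leadsto$ steps producing $T$, maintaining the invariant that every $y \notin T$ is masked by at least one pad that is \emph{not} reconstructible from $M_C$; the one-time-pad property then contributes, at each induction step, a uniform summand that statistically decouples $S_H$ from the fragment of $M_C$ under consideration. Chaining these decouplings yields the full separation $\sep{\progtt(\prog)}{M_C}{S_H}$.
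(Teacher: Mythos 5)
Your high-level ingredients (Theorem~\ref{theorem-toeq} as the bridge from constraints to $\runs(\prog)$, linearity of pads, the one-time-pad property, induction on the leakage closure) are the same ones the paper uses, but your load-bearing invariants are stated in the wrong direction, and one of them is false. In stage 1 you claim that whenever $\leakj{\Gamma}{M_C}{T}$ with $T$ closed, every $y \in T$ is a function, over $\runs(\prog)$, of the $M_C$-values. The closure is a conservative \emph{over-approximation} of what the corrupt views may depend on, not a set of values reconstructible from them: for $\xassign{\mx{w}{1}}{\secret{w} \ftimes \flip{w}}{2}$, rule $\TirName{DepTy}$ gives $\mx{w}{1} : \setit{\sx{w}{2},\rx{w}{2}}$, so both variables enter $T$, yet neither is determined by the product (if the message is $0$ you cannot tell which factor vanished). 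What the proof needs is the converse soundness direction: any \emph{actual} statistical dependence of the corrupt views on a variable $x \in S \cup R$ must show up in the closure. The paper formalizes exactly this as Definition~\ref{definition-sound} (if $\notsep{\progd(\prog)}{\setit{x}}{M'}$ then $x$ lies in the closure of $M'$; if $x$ only depends jointly with some encoding set $X$, then $x$ enters the closure once $X$ is added), proves it is preserved by typing (Lemma~\ref{lemma-cpjsound}, with the $\TirName{DepTy}$/$\TirName{Encode}$ subcases), and pairs it with the trichotomy of Lemma~\ref{lemma-interference} (interference / joint encoding / separation); the theorem then follows because $S_H \cap T = \varnothing$ excludes the interference case for honest secrets, and the remaining cases give separation.

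Your stage 2 has related over-claims. A once-used pad $\rx{w}{\cid}$ is \emph{not} ``independent of every other variable other than the one ciphertext it masks'': jointly with that ciphertext it determines the plaintext, so it is correlated with everything the plaintext depends on. The usable fact is the conditional one-time-pad lemma (as in \cite{barthe2019probabilistic}): if the pad is uniform and independent of the plaintext and the other views, then $\phi' \fplus \rx{w}{\cid}$ (or $\fminus$) is uniform and independent of them---this is precisely what the paper invokes inside the $\TirName{Encode}$ case of Lemma~\ref{lemma-cpjsound}. Similarly, your counting invariant ``every $y \notin T$ is masked by at least one pad not reconstructible from $M_C$'' is neither true (a secret that never reaches any corrupt view is masked by nothing) nor in the right shape for chained keys; what must be tracked is that a variable outside $T$ can influence $M_C$ only through an encoding whose pad also stays outside the closure, i.e., case (ii) of Lemma~\ref{lemma-interference} together with clause (ii) of Definition~\ref{definition-sound}. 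The plan could likely be repaired by inverting your invariant into the paper's over-approximation form, but as written the central claims fail, so there is a genuine gap.
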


\leakjfig

It is important to recall from \cite{skalka-near-ppdp24} that gradual
release does \emph{not} imply passive security, since it only concerns
adversarial messages, but not public reveals of outputs where
dependencies on input secrets can occur even in secure
protocols. Nevertheless, gradual release is an important
confidentiality property for MPC (and other distributed) protocols.

\section{Integrity Types}
\label{section-ipj}

Our integrity type system is essentially a taint analysis, with
machinery to allow consideration of any $H,C$ partitioning.  We also
include rules to consider MAC codes to prevent cheating. We consider a
semi-homomorphic encryption scheme that is used in BDOZ and SPDZ
\cite{SPDZ1,SPDZ2,BDOZ,10.1007/978-3-030-68869-1_3} and is applicable
to various species of circuits. However other MAC schemes could be
considered with a modified derivation rule incorporated with our taint
analysis framework. BDOZ/SPDZ assumes a pre-processing phase that
distributes shares with MAC codes to clients.  This integrates well
with pre-processed Beaver Triples to implement malicious secure, and
relatively efficient, multiplication \cite{evans2018pragmatic}.

In BDOZ, a field value $v$ is secret shared among 2 clients 
with accompanying MAC values.  Each client $\cid$ gets a pair of
values $(v_\cid,m_\cid)$, where $v_\cid$ is a share of $v$
reconstructed by addition, i.e., $v = \fcod{v_1 \fplus v_2}$, and
$m_\cid$ is a MAC of $v_\cid$.  More precisely, $m_\cid = k +
(k_\Delta * v_\cid)$ where \emph{keys} $k$ and $k_\Delta$ are known
only to $\cid' \ne \cid$ and $k_\Delta$. The \emph{local key} $k$ is
unique per MAC, while the \emph{global key} $k_\Delta$ is common to
all MACs authenticated by $\cid'$. This is a semi-homomorphic
encryption scheme that supports addition of shares and multiplication
of shares by a constant-- for more details the reader is referred to
Orsini \cite{10.1007/978-3-030-68869-1_3}. We note that while we
restrict values $v$, $m$, and $k$ to the same field in this
presentation for simplicity, in general $m$ and $k$ can be in
extensions of $\mathbb{F}_p$.

\ipjfig

As in \cite{skalka-near-ppdp24}, we use $\macgv{\mesg{w}}$ to refer to
secret-shared values reconstructed with addition, and by
convention each share of $\macgv{\mesg{w}}$ is represented as
$\elab{\mesg{w\ttt{s}}}{\cid}$, the MAC of which is represented as a
$\elab{\mesg{w\ttt{m}}}{\cid}$ for all $\cid$, and each client holds a
key $\elab{\mesg{w\ttt{k}}}{\cid}$ for authentication of the other's
share. Each client also holds a global key $\mesg{\ttt{delta}}$. For
any such share identified by string $w$, the BDOZ MAC scheme is defined 
by the equality predicate $\macbdoz{w}$:
$$
\macbdoz{w} \defeq
\mesg{w\ttt{m}} = \mesg{w\ttt{k}} \fplus \ttt{(}\mesg{\ttt{delta}} \ftimes
\mesg{w\ttt{s}}\ttt{)}
$$

\subsection{Typing and Integrity Labeling}

The integrity of values during protocol executions depends on the
partitioning $H,C$, since any value received from a member of $C$ is
initially considered to be low integrity. But as for confidentiality
types, we want our basic type analysis to be generalizable to
arbitrary $H,C$. So integrity type judgements, defined in Figure
\ref{fig-ipj}, are of the form $\ipj{\eqs}{\prog}{\Delta}$, where
$\eqs$ plays a similar role as in confidentiality types, as a possibly
weakened expression of algebraic properties of $\prog$, and
environments $\Delta$ record dependency information that can be
specialized to a given $H,C$. Integrity environments are ordered lists
since order of evaluation is important for tracking integrity--
validating low integrity information through MAC checking allows
subsequent treatment of it as high integrity, but not before. Note
that MAC checking is \emph{not} a form of endorsement as it is usually
defined (as a dual of declassification
\cite{sabelfeld2009declassification}), since a successful check
guarantees no deviation from the protocol.

The typing of protocols depends on a type judgement for expressions
of the form $\itj{\cid}{\be}{V}$, which records
messages and reveals $V$ used in the construction of $\be$.
This source is recorded in $\Delta$ for message sends-- so for
example, for some string $w$ letting:
$$
\prog\ \defeq\ (\xassign{\mx{w}{1}}{\secret{w} - \flip{w}}{2};\ \rvl{w} := \mx{w}{1})
$$
the following judgement is valid:
$$
\ipj{\toeq{\prog}}{\prog}{(\mx{w}{1} : \ity{2}{\varnothing}; \rvl{w} : \ity{1}{\mx{w}{1}})}
$$
More formally, we define validity as follows:
\begin{definition}
  Given pre-processing predicate $\eqspre$ and protocol $\prog$, 
  we say $\ipj{\eqs}{\prog}{\Delta}$ is \emph{valid}
  iff it is derivable and $\eqspre \wedge \toeq{\prog} \models E$.
\end{definition}

In the above example, note that the environment implicitly records
that $\rvl{w}$ depends on data from client 1, which in turn
depends on data from client 2. The rule for MAC
checking allows an overwrite of the dependency. We explore
the application of this in more detail in Section
\ref{section-examples}.

\cheatjfig

\subsection{Assigning Integrity Labels}

In all cases, we want to determine the integrity level of computed
values based on assumed partitionings $H,C$. In the previous example,
we want to say that $\rvl{w}$ has low integrity if either client 1 or
2 is corrupt.  To this end we define \emph{security labelings}
$\seclev$ which are mappings from variables $x$ to security levels
$\latel$. By default, given a partitioning $H,C$ this labeling maps
messages to the integrity level of the owner.
\begin{definition}  
  Given $H,C$,
  define $\seclev_{H,C}$ such that $\seclev_{H,C}(\mx{w}{\cid} ) = \hilab$
  if $\cid \in H$  and $\lolab$
  otherwise for all $\mx{w}{\cid}$.
\end{definition}
But the values of messages and reveals may be affected by values sent
by other parties, which is recorded in the $\Delta$ obtained from its
typing. To determine whether this impacts integrity for any given
$H,C$, we apply the inductively defined rewrite relation
$\cheatj{\Delta}{H,C}{\seclev}$, axiomatized in Figure
\ref{fig-cheatj}. The composition of typing and this latter rewriting
obtains essentially a traditional taint analysis, modulo MAC checking.
Our main type soundness result for integrity is then given as follows.
Proof details are given in the Appendix.
\begin{theorem}
  \label{theorem-ipj}
  Given  $\eqspre$ and $\prog$ with
  $\iov(\prog) = S \cup V \cup O$, if
  $\ipj{\eqs}{\prog}{\Delta}$ is valid
  and for all $H,C$ with $\cheatj{\Delta}{H,C}{\seclev}$ 
  we have $\seclev(x) = \hilab$ for all $x \in \houtputs \cup O_H$, then $\prog$
  has integrity for all $H,C$.
\end{theorem}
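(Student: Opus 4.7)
The plan is to proceed by induction on the derivation of $\ipj{\eqs}{\prog}{\Delta}$, establishing an invariant that is strong enough to construct the passive witness $\store'$ required by Definition \ref{def-integrity}. Given an arbitrary adversary $\adversary$, partition $H,C$, and $\store \in \aruns(\prog)$, I would first build $\store' \in \mems(S)$ by setting $\store'_{S_H} \defeq \store_{S_H}$ and choosing $\store'_{S_C}$ (together with an auxiliary honest random tape) so that a passive execution of $\prog$ starting from $\store'$ reproduces the corrupt-to-honest view recorded in $\store_{\cinputs}$. Such a choice exists because honest parties simply consume the messages $\cinputs$ without checking provenance, so any $\store_{\cinputs}$ can be emulated by a suitable passive choice of $\store'_{S_C}$ and corrupt randomness.

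Next I would prove, by induction on the typing derivation, the key invariant: for every variable $x$ with $\cheatj{\Delta}{H,C}{\seclev}$ assigning $\seclev(x) = \hilab$, the value $\store(x)$ observed in the adversarial run agrees with the value the passive semantics would assign to $x$ when run from $\store'$. The base cases \TirName{Value}, \TirName{Secret}, and \TirName{Rando} are immediate from the construction of $\store'$. The inductive \TirName{Mesg}, \TirName{PubM}, and \TirName{Binop} cases rely on the $\cheatj$ rule for $\cid \in H$, which conjoins $\hilab$ with the integrity labels of every dependency in $V$; hence whenever the conclusion is high, all premises have already been validated by induction. The \TirName{Seq} rule composes environments in order, matching the order-sensitive $\cheatj$ derivation.

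With the invariant in hand, the conclusion follows by comparing $\condd{\progtt(\prog,\adversary)}{X}{\store_{S_H \cup \cinputs}}$ with $\condd{\progtt(\prog)}{X}{\store'}$ for $X = (\houtputs \cup O_H) \cap \dom(\store)$. The hypothesis gives $\seclev(x) = \hilab$ for every $x \in X$, so by the invariant each such $x$ has identical value in the two worlds once honest randomness is fixed. Marginalizing over the remaining honest randomness, which is uniform in both $\progtt(\prog,\adversary)$ and $\progtt(\prog)$ by Definition \ref{definition-progd}, yields the equality of conditional pmfs. Abort runs contribute the padded $\bot$ mass, which is already built into $\botruns$, so no further case split is needed.

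The main obstacle will be the \TirName{MAC} rule, where $\seclev$ promotes a corrupt-owned share $\mx{w\ttt{s}}{\cid}$ to $\hilab$ upon a successful BDOZ check. For the invariant to hold at this step I need a purely algebraic unforgeability lemma: under $\toeq{\prog} \models \toeq{\elab{\assert{\macbdoz{w}}}{\cid}}$, any $\arewrite$ transformation the adversary can apply to $\mesg{w\ttt{s}}$ without access to $\mesg{\ttt{delta}}$ must produce the honest share, since otherwise the assertion forces $\store$ into the $\abort$ branch rather than into $\aruns(\prog)$. Formalizing this without invoking probabilistic negligibility will require a careful statement about $\delta$-obliviousness of $\arewrite$; I would discharge it by exhibiting the constraint $\macbdoz{w}$ as determining $\mesg{w\ttt{s}}$ uniquely given a fixed $(\mesg{w\ttt{k}}, \mesg{w\ttt{m}})$ and a free $\mesg{\ttt{delta}}$ in $\mathbb{F}_p$, and then appealing to the separation of $\mesg{\ttt{delta}}$ from the corrupt view to rule out any non-honest witness.
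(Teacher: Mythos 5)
Your strategy---build a single passive witness $\store'$ and show by induction on the typing derivation that every $\hilab$-labelled variable takes the same value in the adversarial and passive runs---is a genuinely different, direct route from the paper, which argues contrapositively through the notion of an \emph{aberration}: an adversarial value in $\cinputs$ that no passive run can reproduce (Lemmas \ref{lemma-aberration-low}, \ref{lemma-undermined-logic}, \ref{lemma-undermined}). But your very first step has a genuine gap. You claim $\store'_{S_C}$ and a corrupt tape can always be chosen so that a passive execution reproduces the whole of $\store_{\cinputs}$, ``because honest parties simply consume the messages without checking provenance.'' That justification confuses what honest parties \emph{accept} with what passive corrupt parties can \emph{send}: in the passive semantics a corrupt client computes its outgoing messages from the protocol's expressions, so the set of passively producible tuples over $\cinputs$ is in general a strict subset of $\mems(\cinputs)$. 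An adversary that, say, sends two copies of a share the protocol requires to be equal, or answers an opening with an inconsistent $(\mx{w\ttt{s}}{\cid},\mx{w\ttt{m}}{\cid})$ pair, yields a $\store_{\cinputs}$ that no choice of corrupt secrets and randomness emulates---exactly what the paper calls an aberration. The theorem's hypothesis does \emph{not} exclude such runs from $\aruns(\prog)$; it only guarantees, via the $\lolab$ propagation in $\cheatj$, that aberrations do not influence $\houtputs \cup O_H$. So your induction would be proving value-agreement relative to a witness $\store'$ that may not exist, and the subsequent marginalization step inherits the problem (the two sides also condition on different events, $\store_{S_H \cup \cinputs}$ versus $\store'$ over all of $S$, which needs an explicit argument).

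The repair essentially recovers the paper's decomposition: weaken the invariant so that $\store'$ only has to reproduce the portion of $\store_{\cinputs}$ on which the $\hilab$-labelled variables actually depend; the labeling rules guarantee that this portion is either honest-derived or has passed a \TirName{MAC} check, and the paper's Lemma \ref{lemma-aberration-low} is precisely the claim that a successful check rules out an aberration for that message. Alternatively, argue contrapositively as the paper does: if integrity fails, some aberration interferes with $\houtputs \cup O_H$ (Lemma \ref{lemma-undermined-logic}), and taint propagation then forces $\seclev(y) = \lolab$ for some $y \in \houtputs \cup O_H$ (Lemma \ref{lemma-undermined}), contradicting the hypothesis. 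Your closing discussion of the \TirName{MAC} rule points in the right direction---the needed fact is exactly ``check success implies the submitted share is one a passive run could have produced''---but your proposed algebraic statement (that $\macbdoz{w}$ determines $\mx{w\ttt{s}}{\cid}$ uniquely with $\mesg{\ttt{delta}}$ left free) is not quite right: for a free $\mesg{\ttt{delta}}$ every candidate share admits a satisfying assignment, so uniqueness only holds relative to the verifier's fixed key material, which is the separation argument you gesture at and which the paper leaves similarly terse.
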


\section{Compositional Type Verification in $\metaprot$}
\label{section-metalang}

The $\metaprot$ language \cite{skalka-near-ppdp24} includes structured
data and function definitions for defining composable protocol
elements at a higher level of abstraction than $\minicat$.  The
$\metaprot$ language is a \emph{metalanguage} aka metaprogramming
language, where an $\fedprot$ protocol is the result of
computation. In addition to these declarative benefits of $\metaprot$,
component definitions support compositional verification of larger
protocols. Separate verification of well-designed components results
in confidentiality and integrity properties to be recorded in their
types, allowing for significant reduction of SMT verification in whole
program analysis, as we will discuss with extended examples in Section
\ref{section-examples}.

\metaprotsyntaxfig

\subsection{Syntax and Semantics}

The syntax of $\metaprot$ is defined in Figure
\ref{fig-metaprotsyntax}.  It includes a syntax of values $\mv$ that
include client ids $\cid$, identifier strings $w$, expressions $\be$
in field $\mathbb{F}_p$, record values, and $\minicat$ variables
$x$. $\metaprot$ expression forms allow dynamic construction of these
values. $\metaprot$ \emph{instruction} forms allow dynamic
construction of $\minicat$ protocols $\prog$ that incorporate expression
evaluation. The syntax also supports definitions of functions that
compute values $\mv$ and, as a distinct form, functions that compute
protocols $\prog$.  Formally, we consider a complete metaprogram to include
both a codebase and a ``main'' program that uses the codebase.
We disallow recursion, mainly to guarantee decidability
of type dependence (Section \ref{section-metalangty}).
\begin{definition}
A \emph{codebase} $\codebase$ is a list of function 
declarations. We write $ \codebase(f) = y_1,\ldots,y_n,\ e$
if $f(y_1,\ldots,y_n) \{ e \} \in \codebase$, and we
write  $ \codebase(f) = y_1,\ldots,y_n,\ \instr$
if $f(y_1,\ldots,y_n) \{ \instr \} \in \codebase$.
\end{definition}

\metaprotexprsemanticsfig

We define a big-step evaluation relation $\redx$ in Figures
\ref{fig-metaprotexprsemantics} and \ref{fig-metaprotinstrsemantics}
for expressions and instructions, respectively.  In this definition we
write $e[\mv/y]$ and $\instr[\mv/y]$ to denote the substitution of
$\mv$ for free occurrences of $y$ in $e$ or $\instr$ respectively. The
rules are mostly standard. Note that we do not include an evaluation
rule for the form $\eqcast{\mx{e}{e}}{\notg{\phi}}$ which is a type
annotation that we assume is erased from programs prior to
evaluation. We defer discussion of this form, as well as the syntactic
category $\notg{\phi}$, to the next Section.

\metaprotinstrsemanticsfig

\subsection{Dependent Hoare Type Theory}
\label{section-metalangty}

Our first goal in the $\metaprot$ type theory is to define an
algorithmic system that is sound for both confidentiality and
integrity typings as defined in Sections \ref{section-cpj} and
\ref{section-ipj}. Returning to the YGC secret encoding example in
Section \ref{section-cpj}, note that the key equivalence of the
$\ttt{mux}$ expression with a one-time-pad encryption we observe there
is not trivial. To ensure that this sort of equivalence can be picked
up by the type system, we introduce an $\ttt{as}$ annotation form that
allows the programmer to provide the needed hint.

Consider the following $\ttt{encode}$ function definition that
generalizes this YGC encoding for any identifier $y$ and sender, receiver
pair $s,r$. The second line in the body is a hint that
the message $\mx{y}{r}$ can be considered equivalent to
$\neg\sx{y}{s} \fplus \rx{y}{s}$:
$$
\begin{array}{l}
\ttt{encode}(y,s,r) \{\\
\quad \xassign{\mx{y}{r}}{\mux{\secret{y}}{\neg\flip{y}}{\flip{y}}}{s};\\
\quad \eqcast{\mx{y}{r}}{\neg\sx{y}{s} \fplus \rx{y}{s}}\\
\}
\end{array}
$$
This hint can be validated using SMT, and then the
syntactic structure of $\neg\sx{y}{s} \fplus \rx{y}{s}$
allows its immediate interpretation as a one-time-pad encoding.

Our second goal in the $\metaprot$ type system is to minimize the
amount of SMT solving needed for type verification.  Returning to the
$\ttt{encode}$ example, type checking in our systems only requires the
hint to be verified once, with the guarantee that $\ttt{encode}$ can
be applied anywhere without needing to re-verify the hint in
application instances. To verify the hint in $\ttt{encode}$, we can
just choose arbitrary ``fresh'' values $w,\cid_1,\cid_2$ for $y$, $s$,
and $r$, evaluate $\ttt{encode}(w,\cid_1,\cid_2) \redx \prog$, and
then verify:
$$\toeq{\prog} \models \mx{w}{\cid_2} \eop
\neg\sx{w}{\cid_1} \fplus \rx{w}{\cid_1}$$ 
Since $\ttt{encode}$ is closed, validity is guaranteed for
any instance of $y$, $s$, and $r$.

We generalize these benefits of compositional verification by allowing
pre-~and postcondition annotations on $\metaprot$ functions. For
example, consider a GMW-style, 2-party ``and-gate'' function
$\ttt{andgmw}(x,y,z)$. In this protocol, each party $\cid
\in \{1,2\}$ holds an additive secret share $\mx{x}{\cid}$ of
values identified by $x$ and $y$, and at the end of the
protocol each hold a secret share $\mx{z}{\cid}$, where:
$$
\mx{z}{1} \fplus \mx{z}{2} \eop (\mx{x}{1} \fplus \mx{x}{2}) \ftimes (\mx{y}{1} \fplus \mx{y}{2})
$$
We provide details of $\ttt{andgmw}$ and other GMW protocol
elements in Section \ref{section-examples}.  Our point now is that,
similarly to the $\ttt{encode}$ example, we can verify this
postcondition once as a correctness property for $\ttt{andgmw}$, and
then integrate instances of it into circuit correctness properties
with the guarantee that each instance also holds for any
$\ttt{andgmw}$ gate.  Since the program logic of $\ttt{andgmw}$ is
non-trivial, and typical circuits can use up to thousands of gates,
this has significant practical benefits by greatly reducing SMT overhead
in whole-program analysis.

\atjfig

\subsubsection{$\minicat$ expression type algorithm.}

A core element of $\metaprot$ type checking is type checking
of $\minicat$ expressions. The integrity type system presented in Section
\ref{section-ipj} is already algorithmic and ready to use. But confidentiality
typing presented in Section \ref{section-cpj} is not syntax-directed
due to the \TirName{Encode} rule. But as described above, by introducing
hint annotations we can ``coerce'' any relevant expression form into
the simplest equivalent syntactic form for one-time-pad encoding.
Thus, in Figure \ref{fig-atj} we present the algorithm for
confidentiality type judgements, where we eliminate the need for
integrated SMT solving by assuming this sort of casting. 

The following Lemma establishes correctness of algorithmic confidentiality
type checking, and makes explicit that SMT checking is eliminated in the
judgement.
\begin{lemma}
  \label{lemma-atj-sound}
  If $\atj{\toeq{\phi}}{R}{\ty}$ then $\eqj{R}{\eqs}{\phi}{\ty}$ for any $\eqs$.
\end{lemma}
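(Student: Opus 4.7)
The plan is induction on the derivation of $\atj{\toeq{\phi}}{R}{\ty}$, with case analysis on the last algorithmic rule applied. In each case I must exhibit a logical derivation delivering the same $R$ and $\ty$ without invoking any nontrivial SMT entailment — this is what lets the conclusion hold uniformly for any $\eqs$.

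The variable base case is immediate: the algorithmic rule forces $\toeq{\phi} = x$, $R = \varnothing$, and $\ty = \setit{x}$, so applying \TirName{DepTy} with $\vars(\phi) = \setit{x}$ closes the case. For the algorithmic Encode case, I apply the induction hypothesis to the premise $\atj{\phi'}{R'}{\ty'}$ to obtain $\eqj{R'}{\eqs}{\phi'}{\ty'}$, then apply the logical \TirName{Encode} rule. The key observation — and the reason the result holds for any $\eqs$ — is that the SMT premise $\eqs \models \phi \eop \phi' \oplus \rx{w}{\cid}$ is discharged by \emph{syntactic} equality: in the algorithmic derivation $\phi$ is literally of the form $\phi' \oplus \rx{w}{\cid}$, so $\eop$-reflexivity suffices and no SMT call is required. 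This is precisely the sense in which the algorithm eliminates SMT solving, and is the payoff of the $\ttt{as}$-annotation discipline described above.

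The binop case is the main technical obstacle. Applying \TirName{DepTy} to the conclusion yields $\vars(\phi_1 \oplus \phi_2) = \vars(\phi_1) \cup \vars(\phi_2)$, which coincides with the algorithmic $\ty_1 \cup \ty_2$ in the straightforward subcase where each $\ty_i$ equals $\vars(\phi_i)$. The subtle subcase is when an inner Encode has produced a cyphertype inside $\ty_1$ or $\ty_2$, since no single logical rule combines cyphertypes across a binop. To address this I would prove an auxiliary inversion/normalization lemma on algorithmic derivations: any such mixed union can be re-derived logically by applying \TirName{Encode} at the outermost layer around each relevant cyphertype and combining the residual variable dependencies via \TirName{DepTy}. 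The same syntactic-equality argument used in the Encode case guarantees that these restructured \TirName{Encode} premises are again discharged without SMT, preserving the ``for any $\eqs$'' quantifier throughout the induction.
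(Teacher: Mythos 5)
Your base case and Encode case are sound, and your central observation is the right one: in the algorithmic Encode case the declarative premise $\eqs \models \phi \eop \phi' \oplus \rx{w}{\cid}$ is discharged by reflexivity because the term being typed is \emph{syntactically} $\phi' \oplus \rx{w}{\cid}$, which is exactly why the conclusion holds for arbitrary $\eqs$; the side condition $\rx{w}{\cid}\notin R$ moreover guarantees that the algorithmic $R \cup \setit{\rx{w}{\cid}}$ coincides with the declarative disjoint union $R;\setit{\rx{w}{\cid}}$ (worth saying explicitly). In the straightforward binop subcase you should also note that ciphertype-freeness of $\ty_1,\ty_2$ forces $R_1 = R_2 = \varnothing$, so the $R$-components match \TirName{DepTy} as well. (The paper itself states this lemma without proof, so these details carry the full burden.)

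The genuine gap is your patch for the mixed binop subcase. The auxiliary ``inversion/normalization'' lemma you appeal to cannot exist for the declarative system of Figure \ref{fig-cpj} as given: its only expression-level rules are \TirName{DepTy}, whose conclusion is always the pure variable set $\vars(\phi)$ with $R = \varnothing$, and \TirName{Encode}, whose conclusion is always a \emph{singleton} ciphertype $\setit{\cty{\rx{w}{\cid}}{\ty}}$; there is no rule that unions an Encode result with residual variable dependencies, nor any weakening or subsumption at the expression level. Concretely, for $\phi = (x \fminus \rx{w}{\cid}) \ftimes y$ the algorithm derives $R = \setit{\rx{w}{\cid}}$ and $\ty = \setit{\cty{\rx{w}{\cid}}{\setit{x}},\, y}$, and this exact pair is not derivable declaratively: \TirName{Encode} applied outermost can only yield a singleton such as $\setit{\cty{\rx{w}{\cid}}{\setit{x,y}}}$ (a different type with a different security reading, and in general the required entailment fails since the pad occurs under $\ftimes$), while \TirName{DepTy} yields $(\varnothing, \setit{x,\rx{w}{\cid},y})$. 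So the restructuring you propose fails precisely in the subcase you flag as subtle. To close the argument you must either show that this mixed subcase cannot arise in the algorithmic derivations to which the lemma is applied (so that the binop case always reduces to the \TirName{DepTy} situation your proof does cover), or restrict or reformulate the statement/declarative rules accordingly; as written, the claim that mixed unions ``can be re-derived logically'' is unsupported and, for the example above, false.
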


\subsubsection{Dependent Hoare types for instructions.}

Hoare-style types for instructions have the following form:
$$
\hty{\eqs_1}{\Gamma}{R}{\Delta}{\eqs_2}
$$
Here, $\eqs_1$ and $\eqs_2$ are the pre- and postconditions
respectively, $\Gamma$ is the confidentiality type environment of the
protocol resulting from execution of the instruction, which is sound
wrt confidentiality typing, $R$ are the one-time-pads consumed in the
confidentiality typing, and $\Delta$ is the sound integrity type
environment.

\notgfig

To obtain a more complete type system, and to generalize
function typings, we also introduce a form of type dependency,
specifically dependence on $\metaprot$ expressions. Dependent
$\Pi$ types have the form:
$$
\dht{y_1,\ldots,y_n}{\notg{\eqs_1}}{\notg{\Gamma}}{\notg{R}}{\notg{\Delta}}{\notg{\eqs_2}}
$$
where $y_1,\ldots,y_n$ range over values $\mv$ and each of
$\notg{\eqs_1}$ etc.~may contain expressions with free occurrences
of $y_1,\ldots,y_n$-- the syntax of these forms is in Figure
\ref{fig-notg}. These $\Pi$ types are assigned to functions
and instantiated at application points. 

For example, here is a valid typing of $\ttt{encode}$.  It says that
under any precondition, evaluating $\ttt{encode}$ results in a cipher
type for the encoded message $\mx{x}{r}$, which consumes the
one-time-pad $\rx{x}{s}$, and the integrity of $\mx{x}{r}$ is
determined by the security level of $s$. The
postcondition expresses the key confidentiality property of
$\mx{x}{r}$, but also may be practically useful for correctness
properties since it is a simpler expression form than the $\ttt{mux}$
form:
$$
\ttt{encode} :
\begin{array}[t]{ll}
  \Pi x,s,r . & \{ \}\quad \mx{x}{r} : \cty{\rx{x}{s}}{\sx{x}{s}}, \{ \rx{x}{s} \} \ \cdot\\
  & \phantom{\{ \}} \quad \mx{x}{r} : \ity{s}{\setit{}} \\
  & \{ \mx{x}{r} \eop \neg\sx{y}{s} \fplus \rx{y}{s} \}
\end{array}
$$
A key property of this example and our type system generally is that once
the $\Pi$ type is verified, typing any application of it requires verification
of the precondition instance, but not the postcondition instance.

\subsection{Algorithmic Type Validity in $\metaprot$}

\mtjfig

We equate types up to evaluation of subexpressions as defined in
Figure \ref{fig-notg}. Since expression evaluation is total we can
just evaluate types of closed instructions, which are guaranteed to
have closed types, to obtain Hoare typings for instructions. In Figure
\ref{fig-mtj} we define the type derivation rules for $\metaprot$. In
the \TirName{Mesg} rule we use algorithmic confidentiality and
integrity type checking in a straightforward manner.  In the
\TirName{Encode} rule, we verify the hint given in the type
annotation, and then use it for type checking.  In the \TirName{App}
rule we verify preconditions of the given function type and
instantiate the type and postconditions with the given expression
arguments.  The \TirName{Sig} rule we specify how to verify $\Pi$
types for functions. Note that the values $\mv_1,\ldots,\mv_n$ chosen
to instantiate the function parameters are chosen ``fresh'', meaning
that any strings and client identifiers occurring as substrings in
$\mv_1,\ldots,\mv_n$ are unique and not used in any program source
code. This ensures generality of verification as demonstrated in Lemma
\ref{lemma-eqs-notg} in the Appendix.

\mtjfnfig

We allow the programmer to specify a type signature $\tsig$ that
maps functions in $\codebase$ to valid dependent Hoare
typings. Define:
\begin{definition}
  $\tsig$ is \emph{verified} iff $f : \tsig(f)$ is valid for all $f \in \dom(\tsig)$.
\end{definition}
In practice, to define $\tsig$ the programmer only needs to specify
pre-~and postconditions, as we illustrate in Section
\ref{section-examples}, since the rest of the type is reconstructed by
type checking.

The following Theorem establishes our main result, that the
$\metaprot$ type checking is sound with respect to both
confidentiality and integrity typings. Since these typings imply
security hyperproperties of confidentiality and integrity, $\metaprot$
type checking implicitly enforces them. Proof details are given in the
Appendix.
\begin{theorem}[Soundness of $\metaprot$ type checking]
  \label{theorem-mtj}
  Given preprocessing predicate $\eqspre$, program $\instr$, and verified $\tsig$, if
  the judgement: $$\mtj{\instr}{\eqspre}{\Gamma}{R}{\Delta}{\eqs}$$ is derivable then
  $\instr \redx \prog$ and:
  \begin{enumerate}
  \item $\cpj{R}{\eqs}{\eqspre \wedge \toeq{\prog}}{\Gamma}$ is valid.
  \item $\ipj{\eqs}{\prog}{\Delta}$ is valid.
  \end{enumerate}
\end{theorem}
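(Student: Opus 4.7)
The plan is to proceed by structural induction on the derivation of $\mtj{\instr}{\eqspre}{\Gamma}{R}{\Delta}{\eqs}$, establishing simultaneously that (i) $\instr \redx \prog$ for some $\prog$, (ii) $\cpj{R}{\eqs}{\eqspre \wedge \toeq{\prog}}{\Gamma}$ is derivable with $\eqspre \wedge \toeq{\prog} \models \eqs$, and (iii) $\ipj{\eqs}{\prog}{\Delta}$ is derivable with the same entailment. Termination of $\instr \redx \prog$ is guaranteed because $\metaprot$ forbids recursion and the expression semantics is total on closed terms; in each rule case the reduction is forced by the corresponding rule of Figure~\ref{fig-metaprotinstrsemantics}. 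The shape of $\toeq{\prog}$ in Figure~\ref{fig-toeq} mirrors the syntactic construction of $\prog$, so the entailment obligations on $\eqs$ will follow from matching the extra conjuncts appearing in each typing rule to the corresponding clauses produced by $\toeq{\cdot}$.

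The cases \TirName{Mesg}, \TirName{Assert}, and \TirName{Seq} are essentially bookkeeping. For \TirName{Mesg}, the reduced $\prog$ is a single assignment $\xassign{x}{\be}{\cid}$, so $\toeq{\prog} = x \eop \toeq{\elab{\be}{\cid}}$, which is precisely the conjunct added to the postcondition; Lemma~\ref{lemma-atj-sound} converts the algorithmic judgement $\atj{\toeq{\elab{\be}{\cid}}}{R}{\ty}$ into a derivable $\eqj{R}{\eqs}{\cdot}{\ty}$ that \TirName{Send} of Figure~\ref{fig-cpj} consumes, and the integrity side is \TirName{Send} of Figure~\ref{fig-ipj} applied directly. \TirName{Assert} is analogous but does not touch $\Gamma$ or $R$. \TirName{Seq} composes two sub-derivations via \TirName{Seq} of Figures~\ref{fig-cpj} and~\ref{fig-ipj}, using transitivity of $\models$ to thread the intermediate $\eqs_2$. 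The \TirName{Encode} case is where the non-syntax-directed \TirName{Encode} rule of Figure~\ref{fig-cpj} is actually invoked: its premise $\eqs \models x \eop \phi$ is exactly the hint verified by SMT in the $\metaprot$ rule, and $\atj{\phi}{R}{\ty}$ supplies the recursive confidentiality premise; the annotation erases, so $\prog$ is empty and the integrity side is trivial.

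The main obstacle is the \TirName{App} case, where the $\Pi$ type $\dht{y_1,\ldots,y_n}{\notg{\eqs_1}}{\notg\Gamma}{\notg R}{\notg\Delta}{\notg{\eqs_2}}$ was verified in \TirName{Fn} using fresh witnesses $\mv_1,\ldots,\mv_n$ but is now instantiated with arbitrary expression arguments. I plan to invoke the auxiliary substitution lemma (Lemma~\ref{lemma-eqs-notg}) advertised in the paper: freshness of $\mv_1,\ldots,\mv_n$ means no string or client identifier occurring in the function body collides with the substituted values, so the verified typing is structural in the witnesses and transports along any well-typed substitution $[\mv_1/y_1]\cdots[\mv_n/y_n]$. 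The side condition $\eqs \models \eqs_1$ in \TirName{App} discharges the substituted precondition in the caller's context. Applying the induction hypothesis inside \TirName{Fn} gives validity of $\cpj{R}{\eqs_2}{\eqspre \wedge \toeq{\prog}}{\Gamma}$ and $\ipj{\eqs_2}{\prog}{\Delta}$ for the reduced body $\prog$, and combining with the caller's entailment yields the conjunction $\eqs \wedge \eqs_2$ demanded by the rule. A minor wrinkle is that \TirName{FnPre} and \TirName{FnPost} of Figure~\ref{fig-mtjfn} permit consequence-style weakening of pre- and postconditions; I would handle these by a standard admissibility argument showing that any derivation can be normalised into a \TirName{Fn}-root form with a single outer consequence, at which point the preceding argument applies unchanged.
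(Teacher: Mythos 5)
Your proposal follows essentially the same route as the paper's (sketched) proof: induction on the derivation of $\mtj{\instr}{\eqspre}{\Gamma}{R}{\Delta}{\eqs}$, discharging the \TirName{Mesg} case via Lemma~\ref{lemma-atj-sound} and the \TirName{Encode}/\TirName{App} cases via the function-signature (\TirName{Fn}) rule together with the freshness/instantiation Lemma~\ref{lemma-eqs-notg}. Your added detail on \TirName{Seq}, \TirName{Assert}, totality of reduction, and normalising \TirName{FnPre}/\TirName{FnPost} uses of consequence is consistent with, and fills in routine steps omitted by, the paper's argument.
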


\section{Extended Examples}
\label{section-examples}

In Section \ref{section-smt} we gave the example of 3-party additive
sharing for any field $\fieldp{p}$. In Sections \ref{section-ipj} and
\ref{section-metalangty} we discussed the example of YGC input encoding.
Now we consider examples that illustrate the versatility of the
$\TirName{Encode}$ typing rule for confidentiality, the use of
postconditions for correctness, and the use of both pre-~and
postconditions for integrity.

To evaluate our system especially regarding performance, using the
Satisfiability Modulo Fields theory in cvc5 on an Apple M1 processor,
we have verified correctness of additive sharing protocols in fields
up to size $\fieldp{2^{31} - 1}$ (an approximation of 32-bit integers)
which takes $\sim$7ms to compute for 3 parties and not significantly more
than for, e.g., $\fieldp{2}$. We have also verified the semi-homomorphic
encryption properties leveraged in the BDOZ circuit library functions
(Figures \ref{fig-bdozsum} and \ref{fig-bdozmult}) in fields up to
size $\fieldp{2^{31} - 1}$ which take $\sim$8ms for the property used for
the postcondition of the $\ttt{mult}$ gate. This is categorically
better performance than can be achieved using previous brute-force
methods in $\fieldp{2}$ \cite{skalka-near-ppdp24}.

\subsection{Confidentiality Examples}

\begin{fpfig}[t]{2-party GMW circuit library.}{fig-gmw}
{\footnotesize
  \begin{verbatimtab}
encodegmw(n, i1, i2) {
    m[n]@i2 := (s[n] + r[n])@i1;
    m[n]@i1 := r[n]@i1
}

andtablegmw(x, y, z) {
    let r11 = r[z] + (m[x] + 1) * (m[y] + 1) in
    let r10 = r[z] + (m[x] + 1) * (m[y] + 0) in
    let r01 = r[z] + (m[x] + 0) * (m[y] + 1) in
    let r00 = r[z] + (m[x] + 0) * (m[y] + 0) in
    { row1 = r00; row2 = r01; row3 = r10; row4 = r11 }
}

andgmw(z, x, y) {
   let table = andtablegmw(x,y,z) in
   m[z]@2 := OT4(m[x], m[y], table, 2, 1);
   m[z]@2 as ~((m[x]@1 + m[x]@2) * (m[y]@1 + m[y]@2)) + r[z]@1;
   m[z]@1 := r[z]@1
}
post:
{  m[z]@1 + m[z]@2 == (m[x]@1 + m[x]@2) * (m[y]@1 + m[y]@2) }
\end{verbatimtab}
}
\end{fpfig}

In the 2-party GMW protocol \cite{evans2018pragmatic}, another garbled
binary circuit protocol, values are encrypted in a manner similar to
BDOZ as described in Section \ref{section-ipj}. In our definition of
GMW we use the convention that shared values $\macgv{\mesg{w}}$ are
identified by strings $w$ and encoded as shares $\mx{w}{1}$ and
$\mx{w}{2}$.  As for YGC, ciphertypes reflect the confidentiality of
GMW input encodings as defined in the $\ttt{encodegmw}$ function in
Figure \ref{fig-gmw}. No programmer annotation is needed given the
syntax of the function body. More interestingly, the $\ttt{andgmw}$
gate definition shows how a programmer hint can express the relevant
confidentiality property of the output share $\mx{z}{2}$ on client 2,
using the $\ttt{as}$ casting. The non-trivial equivalence can be
verified by SMT once during verification, and subsequently
confidentiality is expressed in the dependent type of the function (as
for $\ttt{encode}$).

Note also that the $\ttt{andgmw}$ function is decorated with a
postcondition that expresses the correctness property of the
function. Although not strictly necessary for confidentiality, this
annotation can at least help eliminate bugs, and also can be used
compositionally for whole-program correctness properties. As for
confidentiality hints, this postcondition needs to be verified only
once at the point of definition and subsequently is guaranteed to hold
for any application.

\subsection{Integrity Examples}

\begin{fpfig}[t]{2-party BDOZ circuit library: sum gates and secure opening.}{fig-bdozsum}
{\footnotesize
\begin{verbatimtab}
   pre: { m[x++"m"]@i2 == m[x++"k"]@i1 + (m["delta"]@i1 * m[x++"s"])@i2 }
   _open(x,i1,i2){
     m[x++"exts"]@i1 := m[x++"s"]@i2;
     m[x++"extm"]@i1 := m[x++"m"]@i2;
     assert(m[x++"extm"] = m[x++"k"] + (m["delta"] * m[x++"exts"]))@i1;
     m[x]@i1 := (m[x++"exts"] + m[x++"s"])@i1
   }

  pre: { m[x++"m"]@1 == m[x++"k"]@2 + (m["delta"]@2 * m[x++"s"])@1 /\
         m[x++"m"]@2 == m[x++"k"]@1 + (m["delta"]@1 * m[x++"s"])@2 }
  open(x) { _open(x,1,2); _open(x,2,1) }
   
  _sum(z, x, y, i){
     m[z++"s"]@i := (m[x++"s"] + m[y++"s"])@i;
     m[z++"m"]@i := (m[x++"m"] + m[y++"m"])@i;
     m[z++"k"]@i := (m[x++"k"] + m[y++"k"])@i
  }

  pre: { m[x++"m"]@2 == m[x++"k"]@1 + (m["delta"]@1 * m[x++"s"])@2 /\
         m[y++"m"]@2 == m[y++"k"]@1 + (m["delta"]@1 * m[y++"s"])@2 /\
         m[x++"m"]@1 == m[x++"k"]@2 + (m["delta"]@2 * m[x++"s"])@1 /\
         m[y++"m"]@1 == m[y++"k"]@2 + (m["delta"]@2 * m[y++"s"])@1 }
  sum(z,x,y){ _sum(z,x,y,1); _sum(z,x,y,2) }
  post: { m[z++"m"]@2 == m[z++"k"]@1 + (m["delta"]@1 * m[z++"s"])@2 /\
          m[z++"m"]@1 == m[z++"k"]@2 + (m["delta"]@2 * m[z++"s"])@1 } 
\end{verbatimtab}
}
\end{fpfig}

\begin{fpfig}[t]{2-party BDOZ circuit library: multiplication gates.}{fig-bdozmult}
{\footnotesize
\begin{verbatimtab}
  _mult(z, x, y, i) {
      m[z++"s"]@i :=
        (m[z++"d"] * m[y++"s"] + -m[z++"e"] * m[z++"as"] + m[z++"cs"])@i;
      m[z++"m"]@i :=
        (m[z++"d"] * m[y++"m"] + -m[z++"e"] * m[z++"am"] + m[z++"cm"])@i;
      m[z++"k"]@i :=
        (m[z++"d"] * m[y++"k"] + -m[z++"e"] * m[z++"ak"] + m[z++"ck"])@i
  }
  
  pre: { m[x++"m"]@2 == m[x++"k"]@1 + (m["delta"]@1 * m[x++"s"])@2 /\
         m[y++"m"]@2 == m[y++"k"]@1 + (m["delta"]@1 * m[y++"s"])@2 /\
         m[z++"am"]@2 == m[z++"ak"]@1 + (m["delta"]@1 * m[z++"as"])@2 /\
         m[z++"bm"]@2 == m[z++"bk"]@1 + (m["delta"]@1 * m[z++"bs"])@2 /\
         m[z++"cm"]@2 == m[z++"ck"]@1 + (m["delta"]@1 * m[z++"cs"])@2 /\
         m[z++"cs"]@1 + m[z++"cs"]@2 ==
            (m[z++"as"]@1 + m[z++"as"]@2) * (m[z++"bs"]@1 + m[z++"bs"]@2)}
  mult(z,x,y) {
      sum(z++"d", x, z++"a");
      open(z++"d");
      sum(z++"e", y, z++"b");
      open(z++"e"); 
      _mult(z,x,y,1); _mult(z,x,y,2)
  }
  post: {  m[z++"m"]@2 == m[z++"k"]@1 + (m["delta"]@1 * m[z++"s"])@2 /\
           m[z++"s"]@1 + m[z++"s"]@2 ==
              (m[x++"s"]@1 + m[x++"s"]@2) * (m[y++"s"]@1 + m[y++"s"]@2) } 
  
\end{verbatimtab}
}
\end{fpfig}

Returning to the example of malicious-secure 2-party BDOZ arithmetic
circuits begun in Section \ref{section-ipj}, in Figure \ref{fig-bdozsum} we
define $\ttt{sum}$ and $\ttt{open}$ functions. The latter implements
``secure opening''-- each party sends it's local share of a
global value $\macgv{\mesg{w}}$ along with its MAC to the other party,
which is authenticated via the $\macbdoz{w}$ check (in \ttt{\_open}),
and then each party reconstructs $\macgv{\mesg{w}}$. The preconditions
of $\ttt{open}$ specify the authentication requirements. 

The $\ttt{sum}$ function implements an addition gate. This is
non-interactive-- each party just sums its local shares
of the two values. The pre- and postcondition annotations
on $\ttt{sum}$ express the additive homomorphism associated
with this encryption scheme-- the sum of MACs of the input
shares is a valid MAC for the output share on each client,
which can be checked using the sum of keys of the input shares.

In BDOZ a pre-processing phase is assumed where initial input secrets
are shared along with their associated MACs and keys. This can be
expressed in $\eqspre$ for input secrets $\sx{\ttt{"x"}}{1}$ and
$\sx{\ttt{"y"}}{2}$, for example, which subsume the following
constraints on shares:
{\footnotesize$$
\begin{array}{l}
\mx{\ttt{"xs"}}{2} \eop \sx{\ttt{"x"}}{1} \fminus \rx{\ttt{"x"}}{1} \wedge 
\mx{\ttt{"xs"}}{1} \eop \rx{\ttt{"x"}}{1} \wedge \\
\mx{\ttt{"ys"}}{1} \eop \sx{\ttt{"y"}}{2} \fminus \rx{\ttt{"y"}}{2} \wedge 
\mx{\ttt{"ys"}}{2} \eop \rx{\ttt{"y"}}{2} 
\end{array}
$$}
and the following constraint on keys and MACs for authentication
of $\sx{\ttt{"x"}}{1}$ (and similarly for $\sx{\ttt{"y"}}{2}$):
{\footnotesize$$
\begin{array}{l}
\mx{\ttt{"delta"}}{1} \eop \rx{\ttt{"delta"}}{1} \wedge
\mx{\ttt{"xk"}}{1} \eop \rx{\ttt{"xk"}}{1} \wedge\\
\mx{\ttt{"xm"}}{2} \eop \mx{\ttt{"xk"}}{1} \fplus (\mx{\ttt{"delta"}}{1} * \mx{\ttt{"xs"}}{2})
\end{array}
$$}
Given these global preconditions, a malicious secure opening of $\sx{\ttt{"x"}}{1} +
\sx{\ttt{"y"}}{2}$ would be obtained as
$\ttt{sum}(\ttt{"z"},\ttt{"x"},\ttt{"y"}); \ttt{open}(\ttt{"z"})$,
which type checks.

A common approach to implementing multiplication gates in a BDOZ
setting is to use \emph{Beaver Triples}. Recall that Beaver triples
are values $a,b,c$ with $a$ and $b$ chosen randomly and $c = a * b$,
unique per multiplication gate, that are secret shared with clients
during pre-processing.  In our encoding we assume the additional
convention that each gate output identifier distinguishes the Beaver
triple, so for example the share of the $a$ value for a gate
$\ttt{"g1"}$ is identified by $\ttt{"g1as"}$, etc.


The definition of the $\ttt{mult}$ gate is presented in Figure
\ref{fig-bdozmult},
%
As for $\ttt{sum}$, the preconditions of $\ttt{mult}$ express
the expected MAC properties of input shares, as well as the
expected Beaver Triple property, and its postcondition
fexpresses the semi-homomorphic encryption property of
the resulting share, MAC, and key after gate execution-- specifically,
it preserves the BDOZ authentication property. Finally, the
postcondition of $\ttt{mult}$ expresses the
correctness property of the multiplication gate. In any case,
integrity of any circuit constructed from the $\ttt{sum}$ and
$\ttt{mult}$ library functions will require little SMT overhead
due to compositional verification, once their pre- and postconditions
are verified by type checking.

\section{Conclusion and Future Work}

In this paper we have defined confidentiality and integrity type
systems for the $\metaprot/\minifed$ language model. These type
systems resemble previous security type systems in probabilistic
settings, but their accuracy is boosted by the integral use of SMT
verification.  Our work significantly improves automation,
scalability, and flexibility of static analysis in the
$\metaprot/\minifed$ as compared to prior work.

We see several promising avenues for future work. Concurrency will be
important to capture common MPC idioms such as commitment and circuit
optimizations and to consider the UC security model
\cite{evans2018pragmatic,viaduct-UC}. We also hope to improve our type
system to enforce the declassification bounds determined by ideal
functionalities, beyond gradual release.

\bibliographystyle{splncs04}
\bibliography{logic-bibliography,secure-computation-bibliography}

\begin{thebibliography}{10}
\providecommand{\url}[1]{\texttt{#1}}
\providecommand{\urlprefix}{URL }
\providecommand{\doi}[1]{https://doi.org/#1}

\bibitem{barthe2019probabilistic}
A probabilistic separation logic. Proc. ACM Program. Lang.  \textbf{4}(POPL)
  (dec 2019). \doi{10.1145/3371123}, \url{https://doi.org/10.1145/3371123}

\bibitem{viaduct-UC}
Acay, C., Gancher, J., Recto, R., Myers, A.: Secure synthesis of distributed
  cryptographic applications. In: 2024 IEEE 37th Computer Security Foundations
  Symposium (CSF). pp. 315--330. IEEE Computer Society, Los Alamitos, CA, USA
  (jul 2024). \doi{10.1109/CSF61375.2024.00021},
  \url{https://doi.ieeecomputersociety.org/10.1109/CSF61375.2024.00021}

\bibitem{10.1145/3453483.3454074}
Acay, C., Recto, R., Gancher, J., Myers, A.C., Shi, E.: Viaduct: An extensible,
  optimizing compiler for secure distributed programs. In: Proceedings of the
  42nd ACM SIGPLAN International Conference on Programming Language Design and
  Implementation. pp. 740--755. PLDI 2021, Association for Computing Machinery,
  New York, NY, USA (2021). \doi{10.1145/3453483.3454074},
  \url{https://doi.org/10.1145/3453483.3454074}

\bibitem{almeida2018enforcing}
Almeida, J.B., Barbosa, M., Barthe, G., Pacheco, H., Pereira, V., Portela, B.:
  Enforcing ideal-world leakage bounds in real-world secret sharing mpc
  frameworks. In: 2018 IEEE 31st Computer Security Foundations Symposium (CSF).
  pp. 132--146. IEEE (2018)

\bibitem{BDOZ}
Bendlin, R., Damg{\aa}rd, I., Orlandi, C., Zakarias, S.: Semi-homomorphic
  encryption and multiparty computation. In: Paterson, K.G. (ed.) Advances in
  Cryptology -- EUROCRYPT 2011. pp. 169--188. Springer Berlin Heidelberg,
  Berlin, Heidelberg (2011)

\bibitem{10.1145/2637113.2637119}
Bogdanov, D., Laud, P., Randmets, J.: Domain-polymorphic programming of
  privacy-preserving applications. In: Proceedings of the Ninth Workshop on
  Programming Languages and Analysis for Security. pp. 53--65. PLAS'14,
  Association for Computing Machinery, New York, NY, USA (2014).
  \doi{10.1145/2637113.2637119}, \url{https://doi.org/10.1145/2637113.2637119}

\bibitem{SPDZ1}
Damg{\aa}rd, I., Orlandi, C.: Multiparty computation for dishonest majority:
  From passive to active security at low cost. In: Rabin, T. (ed.) Advances in
  Cryptology -- CRYPTO 2010. pp. 558--576. Springer Berlin Heidelberg, Berlin,
  Heidelberg (2010)

\bibitem{SPDZ2}
Damg{\aa}rd, I., Pastro, V., Smart, N., Zakarias, S.: Multiparty computation
  from somewhat homomorphic encryption. In: Safavi-Naini, R., Canetti, R.
  (eds.) Advances in Cryptology -- CRYPTO 2012. pp. 643--662. Springer Berlin
  Heidelberg, Berlin, Heidelberg (2012)

\bibitem{darais2019language}
Darais, D., Sweet, I., Liu, C., Hicks, M.: A language for probabilistically
  oblivious computation. Proceedings of the ACM on Programming Languages
  \textbf{4}(POPL),  1--31 (2019)

\bibitem{10.1007/978-3-319-48869-1_5}
Dockins, R., Foltzer, A., Hendrix, J., Huffman, B., McNamee, D., Tomb, A.:
  Constructing semantic models of programs with the software analysis
  workbench. In: Blazy, S., Chechik, M. (eds.) Verified Software. Theories,
  Tools, and Experiments. pp. 56--72. Springer International Publishing, Cham
  (2016)

\bibitem{evans2018pragmatic}
Evans, D., Kolesnikov, V., Rosulek, M., et~al.: A pragmatic introduction to
  secure multi-party computation. Foundations and Trends{\textregistered} in
  Privacy and Security  \textbf{2}(2-3),  70--246 (2018)

\bibitem{gao2022symmeproof}
Gao, S., Peng, Z., Tan, F., Zheng, Y., Xiao, B.: Symmeproof: Compact
  zero-knowledge argument for blockchain confidential transactions. IEEE
  Transactions on Dependable and Secure Computing  (2022)

\bibitem{8429300}
Haagh, H., Karbyshev, A., Oechsner, S., Spitters, B., Strub, P.: Computer-aided
  proofs for multiparty computation with active security. In: 2018 IEEE 31st
  Computer Security Foundations Symposium (CSF). pp. 119--131. IEEE Computer
  Society, Los Alamitos, CA, USA (jul 2018). \doi{10.1109/CSF.2018.00016},
  \url{https://doi.ieeecomputersociety.org/10.1109/CSF.2018.00016}

\bibitem{ishai2009zero}
Ishai, Y., Kushilevitz, E., Ostrovsky, R., Sahai, A.: Zero-knowledge proofs
  from secure multiparty computation. SIAM Journal on Computing
  \textbf{39}(3),  1121--1152 (2009)

\bibitem{knott2021crypten}
Knott, B., Venkataraman, S., Hannun, A., Sengupta, S., Ibrahim, M., van~der
  Maaten, L.: Crypten: Secure multi-party computation meets machine learning.
  Advances in Neural Information Processing Systems  \textbf{34},  4961--4973
  (2021)

\bibitem{koch2020privacy}
Koch, K., Krenn, S., Pellegrino, D., Ramacher, S.: Privacy-preserving analytics
  for data markets using mpc. In: IFIP International Summer School on Privacy
  and Identity Management, pp. 226--246. Springer (2020)

\bibitem{li2023lilac}
Li, J.M., Ahmed, A., Holtzen, S.: Lilac: a modal separation logic for
  conditional probability. Proceedings of the ACM on Programming Languages
  \textbf{7}(PLDI),  148--171 (2023)

\bibitem{li2021privacy}
Li, X., Dowsley, R., De~Cock, M.: Privacy-preserving feature selection with
  secure multiparty computation. In: International Conference on Machine
  Learning. pp. 6326--6336. PMLR (2021)

\bibitem{Lindell2017}
Lindell, Y.: How to Simulate It -- A Tutorial on the Simulation Proof
  Technique, pp. 277--346. Springer International Publishing, Cham (2017).
  \doi{10.1007/978-3-319-57048-8\_6},
  \url{https://doi.org/10.1007/978-3-319-57048-8\_6}

\bibitem{liu2020privacy}
Liu, J., Tian, Y., Zhou, Y., Xiao, Y., Ansari, N.: Privacy preserving
  distributed data mining based on secure multi-party computation. Computer
  Communications  \textbf{153},  208--216 (2020)

\bibitem{lu2019honeybadgermpc}
Lu, D., Yurek, T., Kulshreshtha, S., Govind, R., Kate, A., Miller, A.:
  Honeybadgermpc and asynchromix: Practical asynchronous mpc and its
  application to anonymous communication. In: Proceedings of the 2019 ACM
  SIGSAC Conference on Computer and Communications Security. pp. 887--903
  (2019)

\bibitem{6266151}
Mitchell, J.C., Sharma, R., Stefan, D., Zimmerman, J.: Information-flow control
  for programming on encrypted data. In: 2012 IEEE 25th Computer Security
  Foundations Symposium. pp. 45--60 (2012). \doi{10.1109/CSF.2012.30}

\bibitem{10.1007/978-3-030-68869-1_3}
Orsini, E.: Efficient, actively secure {MPC} with a dishonest majority: A
  survey. In: Bajard, J.C., Topuzo{\u{g}}lu, A. (eds.) Arithmetic of Finite
  Fields. pp. 42--71. Springer International Publishing, Cham (2021)

\bibitem{SMFF}
Ozdemir, A., Kremer, G., Tinelli, C., Barrett, C.: Satisfiability modulo finite
  fields. In: Enea, C., Lal, A. (eds.) Computer Aided Verification. pp.
  163--186. Springer Nature Switzerland, Cham (2023)

\bibitem{rastogi2014wysteria}
Rastogi, A., Hammer, M.A., Hicks, M.: Wysteria: A programming language for
  generic, mixed-mode multiparty computations. In: 2014 IEEE Symposium on
  Security and Privacy. pp. 655--670. IEEE (2014)

\bibitem{wysstar}
Rastogi, A., Swamy, N., Hicks, M.: Wys*: {A} {DSL} for verified secure
  multi-party computations. In: Nielson, F., Sands, D. (eds.) 8th International
  Conference on Principles of Security and Trust (POST). Lecture Notes in
  Computer Science, vol. 11426, pp. 99--122. Springer (2019).
  \doi{10.1007/978-3-030-17138-4\_5},
  \url{https://doi.org/10.1007/978-3-030-17138-4\_5}

\bibitem{sabelfeld2009declassification}
Sabelfeld, A., Sands, D.: Declassification: Dimensions and principles. J.
  Comput. Secur.  \textbf{17}(5),  517--548 (oct 2009)

\bibitem{skalka-near-ppdp24}
Skalka, C., Near, J.P.: Language-based security for low-level {MPC}. In:
  Proceedings of the ACM Conference on Principles and Practice of Declarative
  Programming (2024)

\bibitem{tomaz2020preserving}
Tomaz, A.E.B., Do~Nascimento, J.C., Hafid, A.S., De~Souza, J.N.: Preserving
  privacy in mobile health systems using non-interactive zero-knowledge proof
  and blockchain. IEEE access  \textbf{8},  204441--204458 (2020)

\bibitem{10.1145/3498713}
Ye, Q., Delaware, B.: Oblivious algebraic data types. Proc. ACM Program. Lang.
  \textbf{6}(POPL) (jan 2022). \doi{10.1145/3498713},
  \url{https://doi.org/10.1145/3498713}

\end{thebibliography}

\section*{Appendix}

\subsubsection{Confidentiality Type Soundness}


The following Lemma enumerates critical properties of dependence of
views on input secrets and random variables, and follows by standard
properties of pmfs.
\begin{lemma}
  \label{lemma-interference}
  Given $\prog$ with $\iov(\prog) = S \cup V \cup O$ and $\flips(\prog) = R$,
  for all $x \in S \cup R$ and $M \subseteq V$ exactly one of the following conditions holds:
  \begin{enumerate}[\hspace{5mm}i.]
  \item $\notsep{\progd(\prog)}{\setit{x}}{M}$
  \item There exists $X \subseteq S \cup R$ such that
    $\notsep{\progd(\prog)}{\setit{x} \cup X}{M}$ while
    $\sep{\progd(\prog)}{\setit{x}}{M}$ and $\sep{\progd(\prog)}{X}{M}$,
    and we say that $x$ encodes $X$ in $M$.
  \item Neither condition (i) nor (ii) hold, implying $\sep{\progd(\prog)}{\setit{x}}{M}$.
  \end{enumerate}
\end{lemma}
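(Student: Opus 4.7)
The statement is a structured trichotomy about how $x$ relates to $M$ under $\progd(\prog)$, so I plan a direct case analysis on whether $x$ is separated from $M$, together with an argument that the three cases are mutually exclusive and exhaustive.

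First I would establish mutual exclusivity. Case (i) asserts $\notsep{\progd(\prog)}{\{x\}}{M}$, while case (ii) explicitly requires $\sep{\progd(\prog)}{\{x\}}{M}$, and case (iii) is defined as the joint failure of (i) and (ii), which forces $\sep{\progd(\prog)}{\{x\}}{M}$ by negation of (i). Hence (i) is incompatible with both (ii) and (iii). Conditions (ii) and (iii) are mutually exclusive by construction, since (iii) is the negation of the existential in (ii) restricted to the regime where (i) fails.

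Next I would argue exhaustiveness. For arbitrary $x \in S \cup R$ and $M \subseteq V$, the law of excluded middle splits on whether $\sep{\progd(\prog)}{\{x\}}{M}$ holds. If it fails, we are in case (i). If it holds, we further split on the existence of some $X \subseteq S \cup R$ with $\sep{\progd(\prog)}{X}{M}$ and $\notsep{\progd(\prog)}{\{x\} \cup X}{M}$; existence places us in (ii), nonexistence in (iii). The trailing clause of (iii) asserting $\sep{\progd(\prog)}{\{x\}}{M}$ is then immediate from negating (i).

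The only subtle point I foresee is the boundary case $X = \varnothing$ as a candidate witness for (ii). Here $\sep{\progd(\prog)}{\varnothing}{M}$ holds trivially, but $\notsep{\progd(\prog)}{\{x\} \cup \varnothing}{M}$ reduces to $\notsep{\progd(\prog)}{\{x\}}{M}$, which contradicts the regime assumption that (i) fails. Thus the empty $X$ cannot spuriously witness (ii), and the partition is clean. Beyond this check, the argument is essentially a tautology once the three conditions are unpacked against standard pmf separation; the content of the lemma lies in naming the middle ``encoding'' regime so that it can be invoked in subsequent confidentiality soundness arguments.
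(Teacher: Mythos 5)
Your proposal is correct: the lemma is a near-tautology once the three conditions are unpacked, since (iii) is defined as the failure of (i) and (ii), and (ii) explicitly builds in $\sep{\progd(\prog)}{\setit{x}}{M}$, so your case split on whether $\setit{x}$ is separated from $M$, followed by the existential on $X$ (with the empty-set witness correctly ruled out), is exactly what is needed. The paper gives no explicit proof of this lemma at all---it only remarks that it follows by standard properties of pmfs---so your explicit case analysis matches the paper's intent and is, if anything, more detailed than its treatment.
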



The following Definition characterizes the crucial representation invariant
preserved by confidentiality type checking. 
\begin{definition}
  \label{definition-sound}
  Given $\prog$ with $\iov(\prog) = S \cup M \cup P \cup O$
  and $\flips(\prog) = R$, we say that
  \emph{$\Gamma$ is sound for $\prog$} iff for all $M' \subseteq M$
      and $x \in S \cup R$, the following
      conditions hold:
  \begin{enumerate}[\hspace{5mm}i.]
  \item  if $\notsep{\progd(\prog)}{\setit{x}}{M'}$
    there exists $\ty$ with $\leakj{\Gamma}{M'}{\ty}$ and $x \in \ty$.
    \item  if there exists  $X \subseteq S \cup R$ such that
      $\setit{x}$ encodes $X$ in $M'$, 
      then $\leakj{\Gamma}{M'}{\ty}$ and
      $\leakclose{\Gamma}{\ty \cup X}{\ty'}$ and
      $x \in \ty'$.
  \end{enumerate}
\end{definition}
Now we can prove that this invariant is preserved by type checking.
\begin{lemma}
  \label{lemma-cpjsound}
  If $\cpj{R}{\eqs}{\eqspre \wedge \toeq{\prog}}{\Gamma}$ is valid then $\Gamma$ is
  sound for $\prog$.
\end{lemma}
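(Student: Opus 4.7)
The plan is to proceed by structural induction on the derivation of $\cpj{R}{\eqs}{\eqspre \wedge \toeq{\prog}}{\Gamma}$, with cases for the two program-level rules \TirName{Send} and \TirName{Seq}, delegating the heart of the argument to an auxiliary lemma about expression judgements $\eqj{R}{\eqs}{\phi}{\ty}$. For a \TirName{Send} derivation of $x \eop \phi$ producing $\Gamma = (x : \ty)$, we must show that the single entry $x : \ty$ captures the dependence structure of $x$ in $\progd(\prog)$ according to the two clauses of Definition \ref{definition-sound}. When $\ty$ comes from \TirName{DepTy}, $\ty = \vars(\phi)$, so clause (i) holds because by Theorem \ref{theorem-toeq} and the deterministic semantics of Figure \ref{fig-minicat-redx}, any input variable $x'$ with $\notsep{\progd(\prog)}{\setit{x'}}{\setit{x}}$ must syntactically occur in the expression assigned to $x$ and hence lie in $\vars(\phi)$, which is exactly what the \TirName{Leak} rule for $\mx{w}{\cid}$ expansion extracts.

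The \TirName{Encode} case is where the ciphertype structure is justified: the rule yields $\ty = \setit{\cty{\rx{w}{\cid}}{\ty'}}$ under the hypothesis $\eqs \models \phi \eop \phi' \oplus \rx{w}{\cid}$, together with the side condition $R;\setit{\rx{w}{\cid}}$ forcing $\rx{w}{\cid}$ not to appear in any pad set used elsewhere in $\prog$. I would state and prove a one-time-pad lemma: if $\rx{w}{\cid}$ is drawn uniformly from $\mathbb{F}_p$, occurs linearly in $\toeq{\prog}$, and occurs additively in the equation for $x$, then $\sep{\progd(\prog)}{\setit{x}}{\setit{x'}}$ holds for each $x' \in \ty'$ while $\notsep{\progd(\prog)}{\setit{x,\rx{w}{\cid}}}{\setit{x'}}$, placing $x$ in case (ii) of Lemma \ref{lemma-interference} with encoded set $\ty'$. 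Clause (ii) of Definition \ref{definition-sound} then follows because the leakage rules derive $\leakclose{\Gamma}{\setit{\cty{\rx{w}{\cid}}{\ty'}} \cup \setit{\rx{w}{\cid}}}{\setit{\cty{\rx{w}{\cid}}{\ty'}} \cup \ty'}$, and by induction on the derivation of $\eqj{R}{\eqs}{\phi'}{\ty'}$ the inner type $\ty'$ already captures the encoded variables correctly.

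For the \TirName{Seq} case with $\Gamma = \Gamma_1;\Gamma_2$ and $R = R_1;R_2$, any $M' \subseteq M$ decomposes as $M' = M'_1 \cup M'_2$, and the two inductive hypotheses supply witnessing leakage judgements over $\Gamma_1$ and $\Gamma_2$, which compose via the transitivity rule and the rule expanding $\mx{w}{\cid}$ to $\Gamma(\mx{w}{\cid})$; the disjointness $R_1 \cap R_2 = \varnothing$ ensures no pad consumed in one sub-program is reused in the other, so the separation facts supplied by each IH carry over to the joint distribution $\progd(\prog)$.

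The main obstacle is the bridge in the \TirName{Encode} case from the syntactic entailment $\eqs \models \phi \eop \phi' \oplus \rx{w}{\cid}$ to the probabilistic independence facts required by Lemma \ref{lemma-interference}. This requires a careful pushforward argument: because $\rx{w}{\cid}$ is sampled uniformly and occurs exactly once additively (by the linearity discipline on $R$), the marginal on $x$ conditioned on any valuation of the non-pad variables is itself uniform, and Theorem \ref{theorem-toeq} guarantees that the SMT-valid algebraic identity actually holds on every element of $\runs(\prog)$. Formalizing this step—separating the algebraic fact from its probabilistic consequence in $\mathbb{F}_p$—is the only non-routine part; the rest of the induction then proceeds mechanically.
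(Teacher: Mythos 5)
Your proposal is correct and follows essentially the same route as the paper's proof: an induction over the program/typing structure (the paper phrases it as induction on $M'$, peeling off the last message send, which matches your \TirName{Send}/\TirName{Seq} decomposition), followed by the same \TirName{DepTy} versus \TirName{Encode} case split on the expression judgement, with the \TirName{Encode} case resolved by the one-time-pad property and the leakage-closure rule. The only difference is presentational: where you propose to state and prove the one-time-pad/pushforward lemma explicitly, the paper discharges that step by appeal to prior work \cite{barthe2019probabilistic}.
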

\begin{proof}
  Unrolling Definition \ref{definition-sound}, proceed by induction on $M'$.
  In case $M' = \varnothing$ the result follows trivially. Otherwise
  $M' = M_0 \cup \{ \mx{w}{\cid} \}$ and wlog $\Gamma = \Gamma_0; \mx{w}{\cid} : \ty_0$
  and $\prog = \prog_0;\xassign{\mx{w}{\cid}}{\be}{\cid_0}$, where $R = R_0;R_1$ and:
  \begin{mathpar}
    \cpj{R_0}{\eqs}{\eqspre \wedge \toeq{\prog_0}}{\Gamma_0}

    \eqj{R_1}{\eqs}{\toeq{\elab{\be}{\cid}}}{\ty_0}
  \end{mathpar}
  are both valid, and by the induction hypothesis $\Gamma_0$ is sound for $\prog_0$.
  We can then proceed by a second induction on the derivation of $\ty_0$, where there
  are two subcases defined by the $\TirName{DepTy}$ and $\TirName{Encode}$ rules.
  In the former case the result is immediate since all variables occurring in
  $\toeq{\elab{\be}{\cid}}$ are elements of $\ty_0$ by definition. In the latter case,
  we have:
  \begin{mathpar}
    \eqs \models \toeq{\elab{\be}{\cid}} \eop \phi' \fminus \rx{w}{\cid}

    \ty_0 = \cty{\rx{w}{\cid}}{\ty_1}

    R_1 = R_2;\setit{\rx{w}{\cid}}

    \eqj{R_2}{\eqs}{R_2}{\ty_1}
  \end{mathpar}
  These facts imply that $\rx{w}{\cid}$ is used as a one-time-pad to encode
  $\vars(\phi)$ in $M'$ 
  \cite{barthe2019probabilistic}. Observing that:
  $$
  \leakclose{\Gamma}{\setit{\cty{\rx{w}{\cid}}{\ty_1}, \rx{w}{\cid}}}{\ty'}
  $$
  The result in this subcase follows by the (second) induction hypothesis. \qed
\end{proof}
On the basis of the above, proving the main result is straightforward.
\begin{proof}[Theorem \ref{theorem-cpj}]
  Immediate by Lemmas \ref{lemma-cpjsound} and \ref{lemma-interference}. \qed
\end{proof}

\subsubsection{Integrity Type Soundness}

A key observation is that, to undermine integrity adversarial strategies
must be different from passive strategies, in particular at least one
adversarial input must be different than any used in a passive strategy
(an aberration). 
\begin{definition}
  Given $H,C$ and $\prog$ with $\iov(\prog) = S \cup V \cup O$,
  we say $x \in V \cup O$ has an \emph{aberration} iff
  $\exists \adversary, \store \in \botruns(\prog)$ such that
  $\store(x) \ne \bot$ and $\neg\exists \store' \in \runs(\prog)$
  with  $\store_{S_H} = \store'_{S_H}$ and
  $\store(x) = \lcod{\store',\be}{\cid}$ where
  $\xassign{x}{\be}{\cid} \in \prog$.
\end{definition}
It is straightforward to prove that if an aberration can
exist, then it will be assigned $\lolab$ integrity in our
analysis. 
\begin{lemma}
  \label{lemma-aberration-low}
  Given protocol $\prog$ with
  $\views(\prog) = V$, if 
  $\ipj{\eqs}{\prog}{\Delta}$ is valid
  for some $\eqs$ and $x \in \cinputs$ has an
  aberration for some $H,C$, then $\cheatj{\Delta}{H,C}{\seclev}$
  and $\seclev(x) = \lolab$.
\end{lemma}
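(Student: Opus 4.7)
The plan is to prove the contrapositive: if $\seclev(x) = \hilab$, then $x \in \cinputs$ cannot have an aberration. Since the derivation of $\cheatj{\Delta}{H,C}{\seclev}$ builds $\seclev$ inductively by successive extensions $\extend{\seclev}{x}{\cdot}$, the final value of $\seclev(x)$ is determined by the \emph{last} binding for $x$ appearing in $\Delta$. So I first identify which integrity typing rules introduce bindings for $x$: only $\TirName{Send}$ and $\TirName{MAC}$ do so.

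Since $x \in \cinputs$, any $\TirName{Send}$-derived binding for $x$ has the form $x : \ity{\cid}{V}$ with $\cid \in C$, which the corrupt branch of the $\cheatj$ rules maps directly to $\lolab$. Hence if no later binding for $x$ exists in $\Delta$, we immediately obtain $\seclev(x) = \lolab$ and the Lemma holds. The only way to obtain $\seclev(x) = \hilab$ is therefore for $\Delta$ to contain a strictly later binding $x : \ity{\cid}{\varnothing}$ arising from the $\TirName{MAC}$ rule applied to $\elab{\assert{\macbdoz{w}}}{\cid}$, where $x = \mx{w\ttt{s}}{\cid}$ and $\cid \in H$. It thus suffices to show that this case is incompatible with the aberration hypothesis.

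For this case, validity of $\ipj{\eqs}{\prog}{\Delta}$ combined with the $\TirName{MAC}$ premise $\eqs \models \toeq{\elab{\assert{\macbdoz{w}}}{\cid}}$ yields $\eqspre \wedge \toeq{\prog} \models \toeq{\elab{\assert{\macbdoz{w}}}{\cid}}$. In the adversarial semantics, since $\cid \in H$, any $\store \in \botruns(\prog)$ with $\store(x) \ne \bot$ must have this honest assertion pass without abort, so the MAC equation $\mesg{w\ttt{m}} \eop \mesg{w\ttt{k}} \fplus \mesg{\ttt{delta}} \ftimes \mesg{w\ttt{s}}$ holds on $\cid$'s local view in $\store$. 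Under the pre-processing assumption that the MAC key $\mesg{w\ttt{k}}$ and global delta $\mesg{\ttt{delta}}$ are uniformly random samples held only by $\cid$, this equation uniquely pins down $\store(x)$ to the value an honest sender would have produced for the passive run determined by $\store_{S_H}$, contradicting the aberration condition.

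The main obstacle is formalizing the unforgeability step in the final paragraph: the claim that the adversary, whose rewrites via $\arewrite$ depend only on $\store_C$, cannot choose an aberrant share $\store(x)$ together with a matching $\store(\mx{w\ttt{m}}{\cid})$ satisfying the equation. In the symbolic constraint setting this amounts to showing that $\eqspre \wedge \toeq{\prog}$, restricted to memories reachable under $\arewrite$, admits only the passive value of $x$; this will require a careful bookkeeping of which variables appear in the corrupt view versus the honest preprocessed secrets, and may require a mild hypothesis on $\eqspre$ asserting that the MAC keys and delta are independent of variables visible to $C$.
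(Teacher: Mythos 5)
Your proposal takes essentially the same route as the paper's own proof: both arguments observe that every view in $\cinputs$ is assigned $\lolab$ by the corrupt branch of the labeling rules, so the only way to reach $\hilab$ is through the $\TirName{MAC}$ overwrite, and that a successful MAC check rules out an aberration. The unforgeability step you flag as the remaining obstacle is exactly the point the paper itself asserts without further formalization ("success of this check guarantees that the message is not an aberration"), so your contrapositive bookkeeping is, if anything, more explicit than the published argument.
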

\begin{proof}
  Given $H,C$, \emph{every} adversarial input is assigned $\lolab$
  integrity by definition of the $\TirName{Mesg}$ rule and the
  \emph{leak} judgement.  The exception to this is messages that pass
  the $\TirName{MAC}$ check-- success of this check guarantees that
  the message is \emph{not} an aberration, so it can safely be
  switched to $\hilab$ integrity. \qed
\end{proof}
The following Lemma follows trivially by Definition \ref{def-integrity},
since without an aberration adversarial inputs are the same as some passive
strategy.
\begin{lemma}
  \label{lemma-undermined-logic}
  If $\prog$ with $\iov(\prog) = S \cup V \cup O$ does not have
  integrity for some $H,C$, then there exists $\setit{x} \cup V'
  \subseteq \cinputs$ where $\setit{x}$ has an aberration and
  $\notcondsep{\progd(\prog)}{S_H}{\setit{x} \cup V'}{\houtputs \cup O_H}$.
\end{lemma}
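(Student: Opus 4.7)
I would proceed by contrapositive. First unpack Definition~\ref{def-integrity}: if $\prog$ fails integrity for some $H,C$, there exist $\adversary$ and $\store\in\aruns(\prog)$ such that, with $X = (\houtputs\cup O_H)\cap\dom(\store)$, no $\store'\in\mems(S)$ with $\store'_{S_H}=\store_{S_H}$ satisfies $\condd{\progtt(\prog)}{X}{\store'} = \condd{\progtt(\prog,\adversary)}{X}{\store_{S_H\cup\cinputs}}$. I would then extract an aberration: if every $\store(x)$ for $x\in\cinputs$ were reproducible by a passive $\store''\in\runs(\prog)$ with $\store''_{S_H}=\store_{S_H}$, then because honest clients follow identical reductions in Figures~\ref{fig-minicat-redx} and~\ref{fig-minicat-aredx} once $\cinputs$ is fixed, the stitched $\store'$ would drive the passive semantics over $X$ identically to the adversarial one, contradicting the failure. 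So some $x\in\cinputs$ is aberrant.

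Next I would deduce the non-separation. Take $V' = \cinputs - \setit{x}$ and argue by contrapositive: assume $\condsep{\progd(\prog)}{S_H}{\setit{x}\cup V'}{\houtputs\cup O_H}$. Then conditional on $X = \store_X$, the passive marginals of $S_H$ and $\cinputs$ factor independently, so the support of $\cinputs$ given $\store_X$ is unchanged when we additionally fix $S_H=\store_{S_H}$. Applied to our $\store$, this yields a passive $\store'\in\runs(\prog)$ with $\store'_{S_H}=\store_{S_H}$ and $\store'_{\cinputs}=\store_{\cinputs}$, so that $\store(x) = \store'(x) = \lcod{\store',\be}{\cid}$ for the defining assignment $\xassign{x}{\be}{\cid}\in\prog$, directly contradicting the aberration of $x$. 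Hence $\notcondsep{\progd(\prog)}{S_H}{\setit{x}\cup V'}{\houtputs\cup O_H}$, which is the claim.

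The main obstacle, and the step the author dismisses as trivial, is rigorously justifying the coincidence-of-semantics claim used in both halves: any passive final memory $\store'$ that agrees with the adversarial $\store$ on $S_H\cup\cinputs$ (and on the random tape and preprocessed memory) also agrees with it on $X$. Informally the honest clients' rules, the uniform tape, and the preprocessing predicate $\preproc$ are identical across the two semantics, with $\arewrite$ the only locus of difference, and that locus has been matched to a passive value by hypothesis. A short induction on the length of $\prog$ settles this, and a subtlety to dispatch is the edge case where the adversary fabricates a $\store(x)$ of zero passive mass -- here the aberration is automatic and $V'$ is chosen to include a previously-sent view whose passive distribution already depends on $S_H$.
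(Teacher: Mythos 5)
Your overall route is the one the paper intends: its entire proof is the single observation that, by Definition~\ref{def-integrity}, ``without an aberration adversarial inputs are the same as some passive strategy,'' and your first step is an elaboration of exactly that contrapositive. Even there, though, you gloss two points the paper also glosses: non-aberration is a \emph{per-variable} existential, so the passive witnesses for different $x \in \cinputs$ may be different runs, and your ``stitched $\store'$'' silently assumes a single passive run reproducing all of $\store_{\cinputs}$ jointly; and Definition~\ref{def-integrity} demands equality of \emph{conditional distributions} (the passive side conditioned only on a full secret assignment, the adversarial side on $S_H \cup \cinputs$), which does not follow merely from exhibiting one passive run that agrees on $S_H \cup \cinputs$ and hence on the honest views and outputs.

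The genuine gap is in your argument for the non-separation conjunct. Conditional independence in $\progd(\prog)$ is a statement about the \emph{passive} distribution, and your use of it to manufacture a passive $\store' \in \runs(\prog)$ with $\store'_{S_H}=\store_{S_H}$ and $\store'_{\cinputs}=\store_{\cinputs}$ requires the relevant events to have positive passive mass: you need $\progd(\prog)(\store_{X'})>0$, $\progd(\prog)(\store_{S_H}|\store_{X'})>0$ and $\progd(\prog)(\store_{\cinputs}|\store_{X'})>0$, where $X' = \houtputs\cup O_H$ as realized in the adversarial run. But when $x$ is aberrant these quantities are typically zero: by definition the adversarial value of $x$ is never produced passively together with $\store_{S_H}$, it may not be passively producible at all, and the honest views and outputs $\store_{X'}$ were computed \emph{from} that aberrant input, so they too may lie outside the passive support (the paper's convention then makes the conditionals identically $0$). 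In all such cases the assumed factorization holds vacuously and yields no passive $\store'$, hence no contradiction with the aberration. So the ``edge case'' you defer at the end is in fact the central case, and your proposed repair---choosing $V'$ to contain a previously sent view whose passive distribution ``already depends on $S_H$''---is unjustified: such a view need not exist, and even if some view correlates with $S_H$ unconditionally, that does not give the required non-separation \emph{conditioned on} $\houtputs \cup O_H$. Since the paper's one-line proof addresses only the existence of an aberration and says nothing about this conjunct, completing your proof requires a genuinely new argument bridging the possibilistic notion of aberration and the distributional non-separation claim; as it stands, that step fails.
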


\begin{lemma}
  \label{lemma-undermined}
  Given $\prog$ with $\iov(\prog) = S \cup V \cup O$ and
  $\ipj{\eqs}{\prog}{\Delta}$ valid for some $\eqs$.  For all $x \in
  \cinputs$ with an aberration, if
  $\notcondsep{\progd(\prog)}{S_H}{\setit{x} \cup V'}{\houtputs \cup O_H}$
  then there exists $y \in \houtputs \cup O_H$ such that
  $\cheatj{\Delta}{H,C}{\seclev}$ and $\seclev(y) = \lolab$.
\end{lemma}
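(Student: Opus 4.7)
The plan is to proceed by contrapositive: assume $\seclev(y) = \hilab$ for every $y \in \houtputs \cup O_H$ under the rewrite $\cheatj{\Delta}{H,C}{\seclev}$, and derive the separation $\condsep{\progd(\prog)}{S_H}{\{x\} \cup V'}{\houtputs \cup O_H}$, contradicting the hypothesis. By Lemma \ref{lemma-aberration-low}, the aberrant $x$ satisfies $\seclev(x) = \lolab$, so the contrapositive amounts to showing that the $\lolab$ label on $x$ cannot propagate to any $y$ in the honest observation set.

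The key structural observation is a dependency-closure lemma for $\cheatj$. For any $z$ with $\seclev(z) = \hilab$ and $z : \ity{\cid}{V} \in \Delta$, inspection of the $\cheatj$ rule for honest clients forces $\cid \in H$ and $\seclev(z') = \hilab$ for every $z' \in V$, since the rule takes the meet of $\hilab$ with the labels of the elements of $V$. Iterating this defines the dependency closure of $y$ as the smallest set containing $y$ and closed under this $V$-predecessor relation; under the contrapositive assumption, the closure of each $y \in \houtputs \cup O_H$ consists entirely of variables labeled $\hilab$, and in particular excludes $x$.

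The \TirName{MAC} rule is the one way a variable downstream of a corrupt channel can receive $V = \varnothing$ in $\Delta$. Using validity of $\ipj{\eqs}{\prog}{\Delta}$ together with Theorem \ref{theorem-toeq}, a successfully checked share is pinned by $\macbdoz{w}$ to its intended algebraic value in every model of $\toeq{\prog}$, so aberrations must be filtered out before they can reach an $\hilab$-labeled variable introduced via MAC authentication. Combined with the dependency-closure lemma, we conclude that the runtime value of each $y \in \houtputs \cup O_H$ is functionally determined by $S_H$, honest randomness, and corrupt inputs other than $x$; marginalizing over the remaining randomness in $\progd(\prog)$ then yields the desired conditional separation. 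The main obstacle is precisely this last conversion from syntactic functional independence to probabilistic separation over $\progd(\prog)$, analogous in spirit to Lemma \ref{lemma-interference}; the MAC step is also delicate, as it relies on $\macbdoz{w}$ faithfully capturing unforgeability of BDOZ authentication in the algebraic setting of $\toeq{\prog}$.
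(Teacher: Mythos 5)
Your contrapositive is the paper's own argument read in reverse: the paper's proof simply asserts that if $\setit{x}\cup V'$ interferes with $\houtputs\cup O_H$ then $x$ must occur, via the dependencies recorded in $\Delta$, in the definition of some $y$, and then applies Lemma \ref{lemma-aberration-low} together with the meet in Figure \ref{fig-cheatj} to conclude $\seclev(y)=\lolab$ --- exactly the ingredients of your dependency closure, MAC discussion, and label propagation. The step you flag as the main obstacle (passing between $\Delta$-level dependence and probabilistic separation over $\progd(\prog)$) is precisely the step the paper leaves as an unproved assertion, so your sketch matches the paper's proof in both structure and level of detail; just state the closure as excluding dependence on \emph{all} of $V'$ rather than on ``corrupt inputs other than $x$'', since what must be separated from the honest observations is the whole set $\setit{x}\cup V'$.
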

\begin{proof}
  By Lemma \ref{lemma-aberration-low} and definition of the
  confidentiality type analysis, since if $\setit{x} \cup V'$
  interferes with $\houtputs \cup O_H$ then $\setit{x}$
  must occur in the definition of some $y \in \houtputs \cup O_H$,
  yielding $\seclev(y) = \lolab$ in the integrity label assignment
  judgement defined in Figure \ref{fig-cheatj}. \qed
\end{proof}
On the basis of the above we can prove our main result. 
\begin{proof}[Theorem \ref{theorem-ipj}]
  By Lemmas \ref{lemma-undermined} and \ref{lemma-undermined-logic}. \qed
\end{proof}

\subsection{Type Checking Soundness}

Soundness of the $\metaprot$ type analysis follows mostly by a
straightforward mapping to the confidentiality and integrity type
systems. The most interesting rules are signature verification
($\TirName{Sig}$) and function application ($\TirName{App}$). 
Here the result hinges on the selection of ``fresh'' variables
that guarantee generalization of constraint entailment.
First we obtain a key result about instantiation and
satisfiability of constraints. 
\begin{lemma}
  \label{lemma-fresh}
  Given $w$ and $\cid$ with $\fresh(w,\cid)$, if $\notg{\eqs}[w/y_1,\cid/y_n]$
  is not satisfiable, then there exists no $w',\cid'$ such that
  $\notg{\eqs}[w'/y_1,\cid'/y_n]$ is satisfiable.
\end{lemma}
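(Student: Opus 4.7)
The plan is to prove this by contraposition via a renaming argument that exploits the freshness hypothesis. Freshness of $w,\cid$ means that these identifiers appear in $\notg{\eqs}[w/y_1,\cid/y_n]$ only at positions introduced by the substitution itself and do not occur anywhere in the ambient codebase. This should make them behave as ``generic placeholders'' whose specific choice is immaterial for satisfiability.

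I would suppose, for contraposition, that there exist $w',\cid'$ such that $\notg{\eqs}[w'/y_1,\cid'/y_n]$ is satisfied by some memory $\store$. The goal is to construct a memory $\store'$ satisfying $\notg{\eqs}[w/y_1,\cid/y_n]$. To do this I would define a renaming $\pi$ on variables that sends every variable whose identifier string is $w'$ to the corresponding variable with identifier $w$, sends every client index $\cid'$ to $\cid$, and fixes all other variables; since $w,\cid$ are fresh, this map is injective on the variables appearing in $\notg{\eqs}[w'/y_1,\cid'/y_n]$. Then set $\store'(\pi(x)) \defeq \store(x)$ for every $x \in \dom(\store)$.

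The core of the argument is then to show, by structural induction on the grammar of $\notg{\eqs}$ in Figure \ref{fig-notg}, that $\notg{\eqs}[w/y_1,\cid/y_n] = \pi(\notg{\eqs}[w'/y_1,\cid'/y_n])$, and that evaluation of constraint terms commutes with $\pi$ in the sense that $\lcod{\store,\phi}{} = \lcod{\store',\pi(\phi)}{}$. Combining these facts yields $\store' \models \notg{\eqs}[w/y_1,\cid/y_n]$, contradicting the hypothesis of the Lemma.

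The main obstacle will be establishing that $\pi$ is well-defined and injective, which is exactly where $\fresh(w,\cid)$ is essential: if $w$ or $\cid$ already appeared in the ``core'' variables of $\notg{\eqs}$ (those not touched by the substitution of $y_1,\ldots,y_n$), the renaming could collapse two previously distinct variables and change the constraint's meaning. Freshness rules this out, since any variable in $\notg{\eqs}[w'/y_1,\cid'/y_n]$ mentioning $w'$ or $\cid'$ must have been produced by the substitution, so its preimage under $\pi$ is unique. Once this injectivity is established, the commutation of evaluation with renaming is a routine induction on $\phi$ and $\eqs$.
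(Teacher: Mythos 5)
Your renaming goes in the wrong direction, and the step where you invoke freshness is precisely where the argument breaks. The hypothesis $\fresh(w,\cid)$ constrains only $w$ and $\cid$; it says nothing about $w'$ and $\cid'$, which are arbitrary and may well coincide with identifiers or client indices already occurring in the ``core'' of $\notg{\eqs}$ (at $\TirName{App}$ sites this is the normal situation: one verifies once with fresh names and then instantiates with real program names). So your claim that any variable of $\notg{\eqs}[w'/y_1,\cid'/y_n]$ mentioning $w'$ or $\cid'$ must have been produced by the substitution is unjustified. In the collision case, two distinct variables of the fresh instance $\eqs = \notg{\eqs}[w/y_1,\cid/y_n]$ --- one introduced by the substitution and one coming from the core --- collapse to a single variable of $\eqs' = \notg{\eqs}[w'/y_1,\cid'/y_n]$. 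Then (a) your wholesale map $\pi$ (identifier $w' \mapsto w$, client $\cid' \mapsto \cid$) wrongly renames the core variable, so $\pi(\eqs') \ne \eqs$ and your structural-induction claim fails; and (b) no injective renaming with $\pi(\eqs') = \eqs$ can exist at all, since the single collapsed variable would have to be sent to two distinct variables. Hence the push-forward definition $\store'(\pi(x)) \defeq \store(x)$ cannot yield a model of the fresh instance in exactly the case the lemma is designed to handle; the case you do cover (where $w',\cid'$ avoid the core, so $\eqs'$ is an $\alpha$-variant of $\eqs$) is the easy one.

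The repair is to orient the map as the paper does: freshness of $w,\cid$ guarantees a well-defined, possibly \emph{non-injective} function $\mathit{rename}$ from the variables of the fresh instance $\eqs$ to those of $\eqs'$ with $\mathit{rename}(\eqs) = \eqs'$, and one pulls a model $\store'$ of $\eqs'$ back along it by setting $\store(x) \defeq \store'(\mathit{rename}(x))$; compositionality of evaluation then gives $\store \models \eqs$, contradicting the assumed unsatisfiability. Non-injectivity is harmless when pulling a model back, which is why the argument must run from the fresh instance to the arbitrary one rather than the other way around.
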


\begin{proof}
  Let $\eqs = \notg{\eqs}[w/y_1,\cid/y_n]$ and $\eqs' =
  \notg{\eqs}[w'/y_1,\cid'/y_n]$. If $w',\cid'$ are fresh or chosen
  such that $\eqs'$ is an $\alpha$-renaming of $\eqs$, the result is
  immediate.  Otherwise, it must be the case that at least two
  distinct variables $x_1$ and $x_2$ in $\eqs$ are renamed as a single
  variable $x$ in $\eqs'$, given freshness of $w,\cid$. Let
  $\mathit{rename}$ be the function that maps variables in $\eqs$ to
  variables in $\eqs'$ such that $\mathit{rename}(\eqs) = \eqs'$. Let
  $\store'$ be a model of $\eqs'$, and define $\store$ such that:
  $$
  \forall x \in \dom(\vars(E)) . \store(x) \defeq \store'(\mathit{rename}(x)) 
  $$
  Then $\store$ must be a model of $\eqs$, which is a contradiction. \qed
\end{proof}
Now we can extend the preceding Lemma to constraint entailment,
incorporating evaluation of expressions within constraints. 
\begin{lemma}
  \label{lemma-eqs-notg}
  For all $\fresh(\mv_1,\ldots,\mv_n)$, $\notg{E_1}$, $\notg{E_2}$, assuming: 
  \begin{mathpar}
    \notg{\eqs_1}[\mv_1/y_1]\cdots[\mv_n/y_n]  \redx \eqs_1

    \notg{\eqs_2}[\mv_1/y_1]\cdots[\mv_n/y_n]  \redx \eqs_2

    E_1 \models E_2
  \end{mathpar}
  then there exists no $\mv_1',\ldots,\mv_n'$ such that:
    \begin{mathpar}
    \notg{\eqs_1}[\mv_1'/y_1]\cdots[\mv_n'/y_n]  \redx \eqs_1

    \notg{\eqs_2}[\mv_1'/y_1]\cdots[\mv_n'/y_n]  \redx \eqs_2

    E_1' \not\models E_2'
  \end{mathpar}  
\end{lemma}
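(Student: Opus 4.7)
The plan is to reduce this to Lemma~\ref{lemma-fresh} via the standard SMT duality between entailment and unsatisfiability, namely that $\eqs_1 \models \eqs_2$ iff $\eqs_1 \wedge \neg\eqs_2$ is not satisfiable (the Theorem preceding Section~\ref{section-smt-toeq}). The overall strategy is a proof by contradiction: assume there exist $\mv_1',\ldots,\mv_n'$ witnessing the failure of entailment after substitution, and use freshness of $\mv_1,\ldots,\mv_n$ to derive that the fresh instance of the same combined constraint must also be satisfiable, contradicting the hypothesis $\eqs_1 \models \eqs_2$.

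Concretely, first I would form the combined metalevel constraint $\notg{\eqs} \defeq \notg{\eqs_1} \wedge \neg\notg{\eqs_2}$. By the evaluation rules for constraints in Figure~\ref{fig-notg} (and the fact that $\notg{\cdot}$ commutes with logical connectives), the substitution/evaluation $\notg{\eqs}[\mv_1/y_1]\cdots[\mv_n/y_n] \redx \eqs_1 \wedge \neg\eqs_2$ follows from the hypothesized reductions of $\notg{\eqs_1}$ and $\notg{\eqs_2}$. The hypothesis $\eqs_1 \models \eqs_2$ then gives that this fresh instance is \emph{unsatisfiable}.

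Next, suppose for contradiction that some non-fresh choice $\mv_1',\ldots,\mv_n'$ yields $\eqs_1' \not\models \eqs_2'$, so that $\eqs_1' \wedge \neg\eqs_2'$ is satisfiable, i.e.\ $\notg{\eqs}[\mv_1'/y_1]\cdots[\mv_n'/y_n]$ is satisfiable. Applying the contrapositive of Lemma~\ref{lemma-fresh} to $\notg{\eqs}$ (whose fresh instance we just showed is unsatisfiable) immediately yields that no instantiation of $\notg{\eqs}$ can be satisfiable---contradicting satisfiability of the primed instance. Hence no such $\mv_1',\ldots,\mv_n'$ exist.

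The main technical obstacle I anticipate is the bookkeeping around the evaluation relation $\redx$ interacting with negation: Lemma~\ref{lemma-fresh} is stated for a single $\notg{\eqs}$, and one must justify that placing $\neg$ around $\notg{\eqs_2}$ before substitution commutes with the evaluation rules in Figure~\ref{fig-notg}. Since $\notg{\eqs}$ grammatically admits conjunction and the reduction rules evaluate compositionally through $\wedge$, extending this to $\neg$ is routine provided the underlying constraint language is closed under negation (as stated at the top of Section~\ref{section-smt}). Once this commutation is in hand, the reduction to Lemma~\ref{lemma-fresh} is essentially mechanical.
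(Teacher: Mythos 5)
Your proposal is correct and follows essentially the same route as the paper's own proof: both reduce entailment failure to satisfiability of the combined constraint $\eqs_1' \wedge \neg\eqs_2'$ via the standard duality, note that freshness makes $\eqs_1 \wedge \neg\eqs_2$ unsatisfiable, and conclude by contradiction from Lemma~\ref{lemma-fresh}. The only cosmetic differences are that the paper dispatches the $\alpha$-renaming case separately before the contradiction argument, and it glosses over the same negation-under-$\notg{\cdot}$ bookkeeping that you flag explicitly.
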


\begin{proof}
  Given the assumptions, suppose that $E_1' \not\models E_2'$
  is an $\alpha$-renaming of $E_1 \models E_2$. In this
  case the consequence follows immediately. Otherwise,
  suppose on the contrary we can witness
  $E_1' \not\models E_2'$. By definition this means that
  $E_1' \wedge \neg E_2'$ is satisfiable
  but $E_1 \wedge \neg E_2$ is not, which is a contradiction
  by Lemma \ref{lemma-fresh}. \qed 
\end{proof}
Now we can consider the main result.
\begin{proof}[Theorem \ref{theorem-mtj}]
  (Sketch) The result follows by induction on derivation of the
  main type judgement for $\instr$. The \TirName{Mesg} case follows
  by Lemma \ref{lemma-atj-sound}. The \TirName{Encode} and \TirName{App}
  cases follow by definition of the \TirName{Sig} rule and Lemma
  \ref{lemma-eqs-notg}. 
\end{proof}

\end{document}